\documentclass[12pt]{article}

\usepackage{setspace}

\usepackage[left=-1in,right=-1in]{geometry}

\usepackage{amsfonts,epsfig,fullpage}
\usepackage{amsmath, amsfonts, url}
\usepackage[table]{xcolor}

\setlength{\textwidth}{7.3in}
\setlength{\evensidemargin}{-0.4in}
\setlength{\oddsidemargin}{-0.4in}

\begin{document}

\newtheorem{theorem}{Theorem}[section]
\newtheorem{lemma}[theorem]{Lemma}
\newtheorem{fact}[theorem]{Fact}
\newtheorem{claim}[theorem]{Claim}
\newtheorem{corollary}[theorem]{Corollary}
\newtheorem{conjecture}[theorem]{Conjecture}
\newtheorem{question}[theorem]{Question}
\newtheorem{proposition}[theorem]{Proposition}
\newtheorem{axiom}[theorem]{Axiom}
\newtheorem{remark}[theorem]{Remark}
\newtheorem{example}[theorem]{Example}
\newtheorem{exercise}[theorem]{Exercise}
\newtheorem{definition}[theorem]{Definition}
\newtheorem{observation}[theorem]{Observation}
\newtheorem{protocol}{Protocol}
\newtheorem{recipe}{Recipe}

\def\pproof{\par\penalty-1000\vskip .5 pt\noindent{\bf Proof\/ }}
\newcommand{\QED}{\hfill$\;\;\;\rule[0.1mm]{2mm}{2mm}$}

\newenvironment{proof}{\begin{pproof}}{\QED\end{pproof}~\\}

\newcommand{\ExactT}{$f^G_{k}$~}
\newcommand{\Exactn}{{\rm Exactly-}$n$~}

\newcommand{\ind}{\ensuremath{\alpha}}
\newcommand{\chr}{\ensuremath{\chi}}

\newcommand{\rs}{Ruzsa-Szemer\'{e}di}

\title{Algorithmic Number On the Forehead Protocols Yielding 
Dense \rs\ Graphs and Hypergraphs}

\author{Noga Alon 
\thanks{Research supported in part by
NSF grant DMS-1855464, ISF grant 281/17,
BSF grant 2018267
and the Simons Foundation.}\\
Princeton University\\
Princeton, NJ 08544, USA\\
and Tel Aviv University \\
Tel Aviv 69978, Israel \\
{\tt nalon@math.princeton.edu}
\and Adi Shraibman \\
   	The Academic College of Tel-Aviv-Yaffo \\
    	Tel-Aviv, Israel \\
    	{\tt adish@mta.ac.il}
}

\date{}

\maketitle

\abstract{

We describe algorithmic Number On the Forehead protocols that 
provide dense \rs\ graphs. 
One protocol leads to a simple and natural extension of the original 
construction of Ruzsa and Szemer\'{e}di. 
The graphs induced by this protocol have $n$ vertices, $\Omega(n^2/\log n)$ 
edges, and are decomposable into $n^{1+O(1/\log \log n)}$ induced matchings.
Another protocol is an explicit (and slightly simpler) version of the 
construction of \cite{AMS}, 
producing graphs with similar properties. 
We also generalize the above protocols to more than three players, in order 
to construct dense uniform hypergraphs in which every edge lies in a positive
small number of simplices.
}

\section{Introduction}

For an integer $n$ and a positive real $c$, let $h(n,c)$ denote the
maximum number so that any $n$ vertex graph with at least $cn^2$
edges in which every edge is contained in a triangle, must contain
an edge lying in at least $h(n,c)$ triangles.  Erd\H{o}s and
Rothschild asked to determine or estimate $h(n,c)$,
see \cite{CG}, \cite{Er1}, \cite{Er2}, \cite{Er3}. Szemer\'edi
observed that the triangle removal lemma (see \cite{RSz})
implies that for every
fixed $c>0$, $h(n,c)$ tends to infinity with $n$, and Trotter and the
first author noticed that for any $c<1/4$ there is a $c'$ so that
$h(n,c) <c' \sqrt n$. A clever construction of Fox and Loh
\cite{FL} shows that in fact for any fixed $c<1/4$ , $h(n,c) \leq
n^{O(1/\log \log n)}$. While this is still very far from the lower
bound based on the triangle removal lemma and its improved 
quantitative version in \cite{Fo}, which provides a lower bound
exponential in $\log^*n$ for any fixed $c>0$, it does show that
$h(n,c) =n^{o(1)}$. Note that the constant  $1/4$ is tight, as it
is known that any $n$-vertex graph with $\lfloor n^2/4 \rfloor+1$
edges must contain an edge lying in at least $n/6$ triangles (see
\cite{HN}).

The construction of Fox and Loh triggered another surprising result
in the study of a closely related problem. The first author, 
Moitra and Sudakov \cite{AMS}
constructed $(r,t)$-\rs\ graphs on $n$ vertices 
with $r = n^{1-o(1)}$ and $rt = (1-o(1))\binom{n}{2}$.
A graph is an $(r,t)$-\rs\ graph if its set of edges can be partitioned into $t$ pairwise disjoint induced matchings, each of size $r$.
These graphs were introduced in a paper by Ruzsa and Szemer\'{e}di \cite{RSz}. They used these graphs, together with the
regularity lemma of Szemer\'{e}di \cite{Szem} to tackle the 
so called $(6,3)$-problem dealing with the maximum possible number of
edges of a $3$-uniform hypergraph on $n$ vertices that contains no
$3$ edges spanning at most $6$ vertices. 
\rs\ graphs have been studied extensively since, finding applications in Combinatorics, Complexity Theory and Information Theory.
A natural line of research is to find dense graphs with 
relatively large $r$. One such construction is given by Birk, Linial 
and Meshulam \cite{birk1993uniform}, 
with $r=(\log n)^{\Omega(\log\log n/ (\log\log\log n)^2)}$ 
and $t = \Omega(n^2/r)$.
Meshulam conjectured that there are no $(r,t)$-\rs\ graphs with 
both $rt = \Theta(\binom{n}{2})$ and $r \geq n^{\Omega(1)}$.  
The construction from \cite{AMS} disproved Meshulam's conjecture in a strong
form, vastly improving the one in \cite{birk1993uniform}. 

The first aim of the present short paper 
is to describe these results in communication complexity terms by providing
algorithmic Number-On-the-Forehead (NOF, for short) 
protocols that entail them. 
\rs\ graphs are closely related to the NOF model in communication complexity, as observed in \cite{hdp17}. 
They are related to the communication complexity of $2$-dimensional permutations and sub-permutations
(see details in the sequel).
We observe here that communication protocols in the NOF model for $2$-dimensional permutations also 
imply upper bounds on $h(n, c)$.

We give algorithmic NOF protocols that derive the constructions of dense \rs\ graphs from \cite{AMS} and also
the results of Fox and Loh \cite{FL}. 
This makes the constructions strongly explicit and also somewhat simpler. Another advantage of this approach
is that it provides a clear link between these results and the original results of Ruzsa and Szemer\'{e}di \cite{RSz}. 

The second aim of this paper is to extend the above mentioned 
results to uniform hypergraphs.
To do so we extend the protocols to any number $k >3$ of players.  
Let $K_k=K_k^{(k-1)}$ denote the complete $(k-1)$-uniform hypergraph
($(k-1)$-graph, for short) on $k$ vertices.
For an integer $n$ and a positive real $c$, let $h_{k-1}(n,c)$ denote the
maximum number so that any $n$ vertex 
$(k-1)$-graph with at least $cn^{k-1}$
edges, in which every edge is contained in a copy of $K_k$,
must contain
an edge lying in at least $h_{k-1}(n,c)$ such copies. By the hypergraph
removal lemma proved in \cite{Go} and independently in
\cite{RS}, \cite{NRS}, for any fixed positive $c$, $h_{k-1}(n,c)$ tends
to infinity with $n$. Indeed, for example, if $G$ is an $n$-vertex $3$-graph
with at least $cn^3$ edges, and each edge  is contained in at least
$1$ and at most $h=h_3(n,c)$ copies of $K=K_4$, then $G$ must contain
at least $\frac{cn^3}{4h}$ pairwise edge-disjoint copies of $K$.
Hence at least that many edges have to be omitted from $G$ in
order to destroy all copies of $K$, and thus by the hypergraph 
removal lemma if $h$ is a constant then
$G$ must contain at least $\Omega(n^4)$ copies of 
$K$, implying that some edges are contained in $\Omega(n)$ such
copies, contradiction.

Unlike the graph case, the maximum possible number $ex_{k-1}(n,K_k)$ 
of edges of an
$n$-vertex 
$(k-1)$-graph with no copies of $K_k$ is not known. The determination of
this number is an old
problem posed by Tur\'an \cite{Tu}, and Erd\H{o}s offered a
significant award for its solution, see
\cite{Er0}. By a  general result 
proved in \cite{KNS}, the limit of the ratio
$$
\frac{ex_{k-1}(n,K_k)}{n^{k-1}}
$$
as $n$ tends to infinity exists. This is a positive number  
called the Tur\'an density of $K_k$. Let $d_k=d(K_k)$ denote this number,
which is conjectured to be $5/9$ for k=4. 
See \cite{CL} and its references for some of the work on this
problem.  Although $d_k$ is not known, we can prove the
following.
\begin{theorem}
\label{t11}
For any fixed $c<d_k$ there is some $b>0$ so that
$h_{k-1}(n,c) \leq n^{b/ \log \log n}$. 
\end{theorem}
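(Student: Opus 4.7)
The plan is to extend the argument of Fox and Loh \cite{FL} for the graph case ($k=3$) to $(k-1)$-graphs, using the $k$-party NOF protocols that generalize the $3$-party protocols developed earlier in this paper. Fix $c<d_k$ and pick $c'$ with $c<c'<d_k$.

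First, by the definition of the Turán density, there is a constant $m$ and a $K_k$-free $(k-1)$-graph $H_0$ on $m$ vertices with at least $c' m^{k-1}$ hyperedges; this $H_0$ will serve as the ``dense seed.'' The key new ingredient is a $k$-party NOF protocol, generalizing the 3-party protocol for $2$-dimensional (sub-)permutations used earlier, that computes a ``$(k-1)$-dimensional sub-permutation'' on $[N]^k$ with communication $O(\log N/\log\log N)$. Such a protocol partitions its support into $T=N^{O(1/\log\log N)}$ cylinder intersections, each of which is a highly structured ``$(k-1)$-dimensional induced sub-permutation.''

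One then builds the target $(k-1)$-graph $G$ on $n\approx mN$ vertices by suitably substituting the seed $H_0$ into the cylinder-intersection structure so that (i) the density of $G$ is at least $c\, n^{k-1}$, inherited from $H_0$; (ii) every hyperedge of $G$ lies in at least one copy of $K_k$, supplied by the ``simplex'' nature of each cylinder intersection (pairing the sub-permutation's $k$ completions into an induced copy of $K_k$); and (iii) every hyperedge lies in at most $T=n^{O(1/\log\log n)}$ copies of $K_k$, because each $K_k$ must be captured by a single cylinder intersection---this is where the $K_k$-freeness of $H_0$ is used, to rule out cross-intersection $K_k$'s whose $V(H_0)$-projection would form a forbidden clique inside $H_0$.

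The main obstacle will be to design the substitution so that the apparent tension between (a) taking $H_0$ to be $K_k$-free (to reach density close to $d_k$) and (b) requiring every edge of $G$ to lie in at least one $K_k$ is resolved: the $K_k$'s in $G$ must arise entirely from the sub-permutation structure (coincidences across the $k$ ``axis'' coordinates of the NOF protocol), while no $K_k$ ever uses $k$ distinct vertices of $H_0$. Simultaneously, one must verify that the $k$-party analogue of the protocol from the earlier sections attains communication $O(\log N/\log\log N)$; this is a direct adaptation of the 3-party construction and, together with the substitution above, yields Theorem \ref{t11} with any $b$ larger than the implicit constant in the protocol's complexity.
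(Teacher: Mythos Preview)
Your plan is essentially the paper's approach: a $k$-party NOF protocol for the sub-permutation $x_1+\cdots+x_{k-1}=(k-1)x_k$ on $([q]^d)^{k-1}\times Z_{k-1,q,d}$ (obtained by reducing to the $3$-party protocol via $z'=\tfrac12((k-1)x_k-x_3-\cdots-x_{k-1})$), combined with a $K_k$-free ``seed'' to push the edge density up to $d_k-o(1)$. One organizational difference worth noting: rather than working with all $T$ cylinder intersections, the paper extracts a \emph{single} symmetric cylinder intersection $S$ of density $1-o(1)$, builds a two-part hypergraph $G_S$ on $V_A\cup V_B$ with $|V_A|=n$, $|V_B|=N$, and then blows up $V_A$ by a factor $2^t$ with $t=2\log(N/n)$ while restricting the $V_A$-part to a $K_k$-free hypergraph of density $d_k-o(1)$; the resulting bound on copies of $K_k$ through an edge is the blow-up factor $2^t=(N/n)^2$, not the number $T$ of cylinder intersections you invoke---but both quantities are $n^{O(1/\log\log n)}$, so the conclusion is the same.
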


Note that by the results of \cite{ES} on supersaturated
hyperghraphs if $c>d_k$ then any $(k-1)$-graph on $n$ vertices with 
at least $cn^{k-1}$ edges contains $\Omega(n^k)$ copies of $K_k$.
Therefore, for any such $c$ there is a constant $b=b(c)>0$ so that
$h_{k-1}(n,c) \geq b n$, implying that the $d_k$ bound in Theorem
\ref{t11} is tight.

Our protocols also imply an extension of the main result of \cite{AMS}.
That is, it entails a construction of nearly complete $(k-1)$-graphs whose edges can
be partitioned into a nearly linear number of induced subgraphs,
each being a partial Steiner system. Recall that a
$(k-1)$-graph is a partial Steiner system if no two of its edges
share $k-2$ common vertices. It is clear that any such graph on 
$n$ vertices cannot contain more than $\frac{1}{k-1}{n \choose k-2}<n^{k-2}$
edges, and hence any $(k-1)$-graph with at least $bn^{k-1}$ edges cannot be
partitioned into less than $\Theta(n)$ partial Steiner
systems. The hypergraph removal lemma shows here, too, that 
in fact the number of such systems cannot be $\Theta(n)$, that is,
for any fixed positive $b$,
this number divided by $n$ must tend to infinity with $n$. The
following result shows, however, that this number can be smaller
than $n^{1+\epsilon}$ for any positive $\epsilon$.

\begin{theorem}
\label{t12}
For every integer $k \ge 3$, there is an absolute constant $c>0$ so that for sufficiently
large $n$ there is a $(k-1)$-graph on $n$ vertices with at least
$$
(1-o(1)){ n \choose k-1}
$$
edges, whose edges can be decomposed into at most
$n^{1+c/\log \log n}$ induced subgraphs, each being a partial
Steiner system.
\end{theorem}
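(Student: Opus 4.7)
The plan is to extend the three-player NOF-protocol construction from earlier in the paper to $k$ players and read off the hypergraph decomposition from its structure. Set $n = km$, partition the vertex set into $k$ parts $V_1, \ldots, V_k$ each identified with $\mathbb{Z}_m$, and work inside the complete $k$-partite $(k-1)$-uniform hypergraph on $V_1 \cup \cdots \cup V_k$; a copy of $K_k^{(k-1)}$ on these parts corresponds to a transversal $(v_1, \ldots, v_k)$. Consider the sub-permutation $S = \{(x_1, \ldots, x_k) \in \mathbb{Z}_m^k : x_1 + \cdots + x_k \equiv 0 \pmod m\}$, which is a $(k-1)$-dimensional permutation: for any choice of omitted part $j$, projecting $S$ onto the remaining $k-1$ coordinates is a bijection onto $[m]^{k-1}$, and thereby indexes all $m^{k-1}$ $k$-partite hyperedges that avoid $V_j$.

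Next, one designs a $k$-player NOF protocol deciding membership in $S$ with communication at most $c = (1 + \beta/\log\log m)\log m$, partitioning $[m]^k$ into at most $2^c = m^{1+\beta/\log\log m}$ cylinder intersections each lying wholly inside or wholly outside $S$. For each accepting cylinder intersection $C \subseteq S$ and each choice of omitted part $j$, projecting $C$ onto the remaining coordinates produces an induced partial Steiner system: if two projected edges shared $k-2$ vertices, they would differ in exactly one coordinate $i \neq j$, and iterating the cylinder-intersection axioms together with the zero-sum constraint forces a second sum-zero $k$-tuple inside $C$ not equal to either starting tuple, contradicting the fact that $S$ is a sub-permutation. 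Aggregating over the $k$ choices of omitted part yields at most $k\, m^{1+\beta/\log\log m} \leq n^{1+c'/\log\log n}$ induced partial Steiner systems whose union is exactly the edge set of the $k$-partite construction. A standard random-partition / balanced blow-up argument then transfers this decomposition to cover $(1 - o(1))\binom{n}{k-1}$ edges of the complete $(k-1)$-graph on $n$ vertices, with the $1-o(1)$ loss easily absorbed into the $n^{c/\log\log n}$ slack.

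The main obstacle is the design of the $k$-player NOF protocol with communication as small as $(1 + O(1/\log\log m))\log m$. The $k = 3$ case uses a Fox--Loh-style dense corner-free subset of $[m]^2$; for $k \geq 4$ one needs an analogous dense subset of $[m]^{k-1}$ avoiding a richer $k$-point algebraic configuration that arises from overlaps of cylinder intersections with $S$, and one must verify that the Fox--Loh recursive amplification still produces a set of density $m^{-O(1/\log\log m)}$ in this higher-dimensional setting. A secondary and more mechanical subtlety is verifying the induced partial-Steiner property for $k \geq 4$: the one-step hybrid argument sufficient for $k = 3$ must be iterated along several coordinates, making essential use of the layered cylinder-intersection structure generated by the $k$ separate player views.
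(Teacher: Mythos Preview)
Your overall framework is right---NOF protocols for sub-permutations yield the decomposition via star-freeness of cylinder intersections---but there is a genuine gap at precisely the point you yourself flag as ``the main obstacle.'' You never actually produce the $k$-player protocol; you only assert that a Fox--Loh-type amplification ought to extend to some richer configuration in $[m]^{k-1}$ and explicitly leave its verification open. Moreover, your choice of ambient group blocks the paper's route: you work in $\mathbb{Z}_m$ with the modular sum-zero permutation, whereas the paper's protocol rests on Euclidean geometry in $[q]^d\subset\mathbb{R}^d$ (real addition, $\ell_2$ norms, the parallelogram law, and a proper coloring of a bounded-degree distance graph), none of which is available over $\mathbb{Z}_m$. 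The paper's function is $g_{k,q,d}(x_1,\ldots,x_k)=1$ iff $x_1+\cdots+x_{k-1}=(k-1)x_k$ over the reals, with the last coordinate ranging over the enlarged set $Z_{k-1,q,d}$ of size at most $(kq)^d=n^{1+1/\log_k q}$; this built-in $n^{o(1)}$ slack in the codomain, together with the $O(d)$-bit protocol, is what produces the $n^{1+c/\log\log n}$ count.

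The idea you are missing for $k>3$ is far simpler than a higher-dimensional Fox--Loh analogue: reduce directly to $k=3$. Rewrite $x_1+\cdots+x_{k-1}=(k-1)x_k$ as $x_1+x_2=2z'$ with $z'=\tfrac12\bigl((k-1)x_k-x_3-\cdots-x_{k-1}\bigr)$; after one bit from player~$1$ checking $z'\in Z_{2,q,d}$, players $1$, $2$, and $k$ run the three-player protocol on $(x_1,x_2,z')$ at cost $O(d)$. A single extra bit from the last player, verifying that all pairwise $\|x_i-x_j\|_2^2$ lie in the same Hoeffding-typical interval, symmetrizes the resulting cylinder while keeping its density $1-o(1)$. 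The induced partial-Steiner property then follows in one stroke from star-freeness of cylinder intersections for weak sub-permutations (no iterated hybrid is needed), and the recipe outputs the $(k-1)$-graph on $[n]$ directly with edge density $1-o(1)$: no $k$-partite scaffold, random partition, or blow-up enters the argument.
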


The rest of the paper contains the proofs of the above two
theorems. The organization is as follows. Section~\ref{sec:background} contains
background on communication complexity and high-dimensional permutations, a recipe for proving 
Theorem~\ref{t11} and Theorem~\ref{t12} using communication protocols,
and a simple application of this recipe to construct a graph on $n$ vertices and $\Omega(n^2/\log n)$ 
edges, decomposable into $n^{1+O(1/\log \log n)}$ induced matchings.
Section~\ref{apply_recipe_to_prove_theorems} contains the application of this recipe to prove Theorem~\ref{t11} and Theorem~\ref{t12}.
The details of the graphs and hypergraphs produced by this recipe, and the proof 
that it works correctly  are given in 
Section~\ref{sec:recipe_proof_complete}. The final Section \ref{s5} 
contains
a brief summary.

\newcommand{\trans}{T}

\section{From communication to graphs and hypergraphs}
\label{sec:background}

\subsection{Background and notation}

\paragraph{General notation}

We let $[n] = \{1,2,\ldots,n\}$. A $k$-tuple is denoted either $(x_1,\ldots,x_k)$ or
in abbreviated form $\vec{x}$. 

\paragraph{Communication complexity}

We start with a few basic communication complexity notions. The definitions we give are a simplified version and 
adjusted to our needs. The interested reader can see \cite{KN97} for a more comprehensive survey.
In the NOF model $k$ players wish to compute a function $f:X_1\times X_2 \times \cdots \times X_k \to \{0,1\}$.
The players agree on a communication {\em protocol} $P$.
Then, an input $(x_1, x_2,\ldots,  x_k)$ is presented to the players so player $i$ sees all input except $x_i$,
we sometimes refer to this player as the $x_i$-player.
The players take turns to write messages on a blackboard according to the agreed protocol $P$. 
Each message of each player may depend on the part of the input seen by this player,
and except for the last player it can also depend on the messages written so far on the blackboard.
The message written by the last player depends only on the part of the input he sees, and is independent
of the content of the blackboard.
One way to visualize this is as if the last player wrote a message first and then did not participate in the rest
of the transaction. The value of the function can be computed by all
players from the content of the board at the end of the protocol.
The {\em cost} of a protocol, denoted $C(P)$, is the maximal 
number of bits written on the board, over all inputs, by the first $k-1$ players
\footnote{In the basic communication complexity definition all players can see each others messages, and
the cost of the protocol depends also on the message of the last player. The version of communication complexity we
gave here is from the {\em one-sided} model. Since we only need this version, we simplify our notations.}.

The string of bits written on the blackboard for a given input $\vec{x}=(x_1,\ldots,x_k)$ is called a {\em transcript},
denoted $\mathcal{T}(\vec{x})$. We let
$\mathcal{T}_i(\vec{x})$ for $i=1,\ldots,k$ be the part of this transcript that is written by player $i$.
Let $\trans$ be a transcript, the subset $S=S(\trans)$ 
of entries satisfying  
$\mathcal{T}(\vec{x}) = \trans$ and $f(\vec{x}) = 1$, 
is called a {\em cylinder intersection} \footnote{The usual definition of cylinder intersection is more general, 
what we defined here is 
referred to as a $1$-monochromatic cylinder intersection. Since we are only interested in $1$-monochromatic cylinder intersections
we abbreviate the notation.}.  Note that a cylinder intersection is defined with respect to a function and a protocol for this function,
we specify the function and protocol when it is necessary for a clear presentation and otherwise omit them.

We say that a subset of entries $S$ is {\em symmetric}
if membership in $S$ does not depend on the order of the first $k-1$ entries. That is, $S$ is symmetric if
$(x_1,\ldots,x_{k-1},x_k) \in S$ if and only if $(x_{\pi(1)},\ldots,x_{\pi(k-1)},x_k) \in S$ for every 
permutation $\pi$ on $\{1,2,\ldots,k-1\}$.

\paragraph{High-dimensional permutations}

A {\em line} in $[n]^{k}$ is a subset $L \subset [n]^k$ such that $k-1$ of the coordinates in $L$
are fixed, and the remaining coordinate takes all possible values. Following is a simple example with $n=5$ and $k=3$:
\[
L = \{(1, 1, 4), (1,2,4), (1,3,4), (1,4,4), (1,5,4)\}.
\] 
In this example the first and third coordinates are fixed, and the second coordinate takes all possible values in $[5] = \{1,2,3,4,5\}$.
There is a distinct line for every choice of unconstrained coordinate $i \in [k]$, and a choice of values to fix the 
remaining coordinates. A line in $[n_1]\times \cdots \times [n_k]$ is defined similarly. We say that 
the line is in the {\em $i$th dimension} if the unconstrained coordinate is $i$. 

A {\em $(k-1)$-dimensional permutation} is a function $f:[n]^k \to \{0,1\}$ such that for every line $L$
in $[n]^k$ there is exactly one $\vec{x} \in L$ such that $f(\vec{x})=1$.
A {\em sub-permutation} is a function $f:[n]^{k-1}\times [N] \to \{0,1\}$ such that every line in the $k$th dimension
contains a single $1$, and every other line contains at most
one $1$. 

For example, let $G$ be a group, define $f: G^k \to \{0,1\}$ 
by $f(x_1, \ldots, x_k)=1$ if and only if
$x_1+x_2+\cdots +x_k = 0$. Then $f$ is a permutation. 
Let $H$ be a subset of $G$, then the function $h: H^{k-1}\times G \to \{0,1\}$ defined 
similarly to $f$, is a sub-permutation.  

A {\em weak permutation} is a function $f:[n]^k \to \{0,1\}$ such that every line
contains at most one $1$-entry, and a {\em weak sub-permutation} 
is defined similarly: it is an $f:[n]^{k-1} \times [N] \to \{0,1\}$
with $N \geq n$ such that every line contains at most one $1$-entry.

\paragraph{\rs\ graphs and hypergraphs}

As mentioned in the introduction, a graph is an $(r,t)$-\rs\ graph if its set of edges can be partitioned into 
$t$ pairwise disjoint induced matchings, each of size $r$. Such a graph obviously has $rt$ edges. A challenge
in constructing \rs\ graphs is to make the density of edges as large as possible while keeping the number of 
matchings relatively low. We are therefore less concerned with the size of each matching, and only worry about the number 
of matchings and the density of the edges.

There is a natural way to extend the notion of \rs\ graphs to hypergraphs, by considering 
Steiner systems $S(k-2,k-1)$. A {\em Steiner system} $S(t-1,t)$ 
in a set $V$, is a family of $t$-element subsets of $V$ (called {\em blocks}) 
such that each $(t-1)$-element subset of $V$ is contained in exactly one block. A {\em partial Steiner
system} is defined similarly with the exception that each $(t-1)$-element subset of $V$ is contained 
in {\bf at most} one block. 

For a natural number $k > 2$, and a $(k-1)$-graph $G=(V, E)$ we 
are interested in partitioning $E$ into
induced partial Steiner systems $S(k-2,k-1)$. Note that if $V$ is the set of vertices of a graph, then 
a partial Steiner system $S(1,2)$ in $V$ is a matching. Thus, this definition extends the notion of a \rs\ graph.

\subsection{A recipe}
\label{subsec:a_recipe}

Given a function $f:[n]^{k-1} \times [N] \to \{0,1\}$, a protocol $P$ for $f$,
and a transcript $\trans$ of the last player, denote
\[
S_{k}(\trans) = \{ (x_1,\ldots,x_k) \in [n]^{k-1} \times [N]: \mathcal{T}_k(x_1,\ldots,x_k) = \trans 
\;\; and \;\; f(x_1,\ldots,x_k)=1 \}. 
\]
Next we describe a recipe for generating \rs\ graphs and hypergraphs, as well as upper bounds on $h_{k-1}(n, c)$, 
from NOF protocols. 

\bigskip
\fbox{
\begin{minipage}{5.8in}
\begin{recipe}{\sf{- from protocols to graphs and hypergraphs}}
\label{recipe}
\begin{enumerate}

\item Choose a weak sub-permutation $f:[n]^{k-1} \times [N] \to \{0,1\}$,
for natural numbers $n$, $N$ and $k>2$.

\item Construct a communication protocol $P$ for $f$.

\item Pick a transcript $\trans$ 
of the last player so that $S_k(\trans)$
is symmetric, and let $S = S_k(\trans)$. 
\end{enumerate}

\end{recipe}
\end{minipage}
}
\bigskip

The following theorem describes the outcome when following Recipe~\ref{recipe}. 
\begin{theorem}
\label{main_cc_th}
Let $P$ be a protocol found in the second step of Recipe~\ref{recipe},
and let $S$ be the subset of inputs picked in the last step.
Let $p=|S|/n^{k-1}$, $\gamma = C(P)$ and $N'=N\cdot 2^{\gamma}$, then

\begin{enumerate}

\item There is an (explicitly defined) $(k-1)$-graph on $n$ vertices whose edge density is $p$,
that is the union of $N'$ induced partial Steiner systems $S(k-2, k-1)$. 

\item If $p=1-o(1)$, then $h_{k-1}(n,c) \le (N'/n)^2$ for $c < d_k$. 
Here, the construction
of the $(k-1)$-graph that gives the bound is also explicit, given
explicit constructions of $(k-1)$-graphs of density
$d_k-o(1)$ which contain no $K_k$.

\end{enumerate}
\end{theorem}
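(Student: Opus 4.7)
My plan is to prove both parts by using the symmetric set $S$ together with the NOF cylinder-intersection structure of the protocol $P$. For part 1, I would take $G$ to be the $(k-1)$-graph on $[n]$ whose edges are the $(k-1)$-sets $\{x_1,\ldots,x_{k-1}\}$ of distinct elements such that $(x_1,\ldots,x_{k-1},x_k)\in S$ for some $x_k\in[N]$; symmetry of $S$ makes this well-defined, the weak sub-permutation property makes the witness $x_k$ unique per edge, and after discarding the $O(n^{k-2})$ tuples with a repeated coordinate one gets edge density $p+o(1)$. Each edge $e$ is then assigned to the class $(\tau,x_k)$, where $x_k$ is the witness for $e$ and $\tau$ is the transcript of the first $k-1$ players on a canonically ordered representative tuple for $e$; this gives at most $2^\gamma N=N'$ classes.

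The technical core is showing that each class is an induced partial Steiner system in $G$. The partial Steiner property follows from the weak sub-permutation condition: two edges in the same class share the same witness $x_k$, and if they additionally shared $k-2$ vertices they would yield two $1$-entries on a common line of $S$. The induced property requires a cylinder-intersection splicing argument: for an additional edge $e'=\{z_1,\ldots,z_{k-1}\}$ of $G$ whose vertex set lies inside the vertex set of the class, each $z_i$ comes with a witness tuple inside the cylinder $C_\tau$ at slice $x_k$. Since each of the first $k-1$ players' contributions to $\tau$ is determined by that player's view of the input, these witness tuples can be spliced coordinate by coordinate to produce a tuple $(z_1,\ldots,z_{k-1},x_k)$ still in $C_\tau$; the weak sub-permutation condition then forces this tuple to be a $1$-entry of $f$, so $e'$ lies in the class.

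For part 2, assume $p=1-o(1)$, and let $F$ be an explicit $K_k$-free $(k-1)$-graph on $[n]$ of density $d_k-o(1)$. I would define $H$ as the sub-hypergraph of $G$ consisting of those edges that, together with edges drawn from $F$, extend to at least one $K_k$. A density calculation using $|E(G)|=(1-o(1))\binom{n}{k-1}$ and $|E(F)|=(d_k-o(1))\binom{n}{k-1}$ shows that for any $c<d_k$ and large $n$, $|E(H)|\ge cn^{k-1}$, and by construction every edge of $H$ lies in a $K_k$. To bound the number of $K_k$'s through a fixed edge $e$, observe that the $k$ faces of any $K_k$ pairwise share $k-2$ vertices and so must lie in $k$ distinct classes; a $K_k$ through $e$ is therefore determined by the extension vertex $v$ together with the classes of the $k-1$ other faces, and pinning $v$ down through two of these classes, each of which has at most $O(n^{k-2})$ edges, yields the $(N'/n)^2$ bound. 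The main obstacle is the induced-ness argument in Part 1, which requires the coordinate-by-coordinate splicing to respect both the symmetry of $S$ and the order-dependence of the first $k-1$ players' transcripts; once that is settled, Part 2 reduces to the density computation and the counting just described.
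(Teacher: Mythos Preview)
Your Part 1 is essentially the paper's approach. The paper first absorbs the transcript of the first $k-1$ players into the last coordinate, replacing $[N]$ by $[N'] = [N]\times\{0,1\}^{\gamma}$, so that $S$ becomes a genuine cylinder intersection $S'$ for a weak sub-permutation $g$; it then partitions the edges by the enlarged last coordinate and invokes the star-free property of cylinder intersections. Your class label $(\tau,x_k)$ is exactly this enlarged last coordinate, and your ``splicing'' is the star-free/closure property. The symmetry-versus-order issue you flag is real and is exactly what the paper's use of symmetry of $S$ handles (for $k=3$): it lets one place each vertex in whichever coordinate the star argument needs. One correction: the step ``the weak sub-permutation condition then forces this tuple to be a $1$-entry of $f$'' is not right as stated---being in the cylinder $C_\tau$ does not by itself force $f=1$. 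The correct finish is the contrapositive star argument: if $e'$ had a \emph{different} witness $x_k'\ne x_k$, then $(z_1,\dots,z_{k-1},x_k')$ together with the witness tuples would form a forbidden star.

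Your Part 2 diverges from the paper and has a genuine gap. The paper does \emph{not} work inside the graph $G$ on $[n]$. It builds a $(k-1)$-graph $G_S$ on the two-sorted vertex set $V_A\cup V_B$ with $V_A=[n]$ and $V_B=[N']$, in which (again by star-freeness) the only copies of $K_k$ with $k-1$ vertices in $V_A$ and one in $V_B$ are exactly the elements of $S$. It then blows up $V_A$ by a factor $2^t$ and keeps only those $V_A$-edges that lie in a fixed $K_k$-free $(k-1)$-graph of density $d_k-o(1)$ on $[2^t]\times[n]$; this kills all $K_k$'s entirely inside $V_A$, so every surviving edge lies in at most $2^t$ (mixed) copies of $K_k$. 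Choosing $t=2\log(N'/n)$ makes the edge density $d_k-o(1)$ on the new vertex set and yields exactly $2^t=(N'/n)^2$ copies per edge. Your construction instead stays on $[n]$, where, since $p=1-o(1)$, a typical edge of $G$ lies in $\Theta(n)$ copies of $K_k$; the class decomposition bounds this only by $(k-1)N'$ (for each vertex of $e$ dropped and each of the $N'$ classes there is at most one completing face, by the partial Steiner property), which is $n^{1+o(1)}$ in the intended regime and hence trivial. I do not see how ``pinning $v$ down through two of these classes'' improves this to $(N'/n)^2$, and the role of your $K_k$-free graph $F$ is unclear: intersecting $G$ with $F$ removes all $K_k$'s rather than bounding their multiplicity per edge.
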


We defer the proof of Theorem~\ref{main_cc_th} and the explicit definition of the graphs produced 
by Recipe~\ref{recipe} to Section~\ref{sec:recipe_proof_complete}. 
In the next section we give a simple example of how Theorem~\ref{main_cc_th} can be applied, then in 
Section~\ref{apply_recipe_to_prove_theorems} we apply it to prove Theorems~\ref{t11} and \ref{t12}.

\subsection{Applying Theorem~\ref{main_cc_th} - an example}
\label{an_example}

We apply Theorem~\ref{main_cc_th} to prove:
\begin{lemma}
\label{lem:simple_rs_graph}
There is a graph on $n$ vertices with edge density $\Omega(1/\log n)$ that is the union of
$n^{1+1/\Omega(\log\log n)}$ induced matchings.
\end{lemma}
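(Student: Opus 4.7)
The plan is to apply Theorem~\ref{main_cc_th} with $k=3$, following the three steps of Recipe~\ref{recipe}.

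\emph{Step 1 (choice of $f$).} I take $G=\mathbb{Z}_n$ and the permutation
\[
f:[n]^2\times[n]\to\{0,1\},\qquad f(x,y,z)=1\iff x+y+z\equiv 0 \pmod n.
\]
For any two fixed coordinates the third is uniquely determined, so every line contains exactly one $1$; hence $f$ is a (weak) sub-permutation with $|f^{-1}(1)|=n^{2}$, and it is symmetric under $x\leftrightarrow y$.

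\emph{Step 2 (protocol).} I design a three-player NOF protocol for $f$ with two features: (a) the last player (the $z$-player, who sees $x,y$) writes a \emph{symmetric} message $m_z(x,y)$ taking only $T=\lceil\log n\rceil$ distinct values, and (b) the first two players complete the verification of the equation $x+y+z\equiv 0\pmod n$ with $\gamma=O(\sqrt{\log n})$ bits. For (a) I partition $\mathbb{Z}_n$ into $T$ arcs of length $n/T$ and let $m_z$ be the index of the arc containing $-(x+y)\bmod n$; since this depends only on $x+y$, it is symmetric in $(x,y)$. For (b), once the bucket has pinned $-(x+y)\bmod n$ inside a window of width $n/T$, the $x$- and $y$-players confirm the exact value by invoking the Chandra--Furst--Lipton three-player NOF protocol for Exactly-$(n/T)$, whose cost is $O(\sqrt{\log n})$ thanks to the Behrend-type $3$-AP-free set.

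\emph{Step 3 (symmetric cylinder).} Because $m_z$ is symmetric in $(x,y)$, \emph{every} cylinder $S_3(\trans)$ is symmetric, and one has $\sum_{\trans}|S_3(\trans)|=|f^{-1}(1)|=n^{2}$. Averaging over the $T=O(\log n)$ possible transcripts yields a transcript $\trans^*$ with
\[
|S_3(\trans^*)|\;\geq\;\frac{n^{2}}{T}\;=\;\Omega\!\left(\frac{n^{2}}{\log n}\right),
\]
so $p=\Omega(1/\log n)$. Applying Theorem~\ref{main_cc_th} produces a graph on $n$ vertices with this density that is the union of
\[
N\cdot 2^\gamma \;=\; n\cdot 2^{O(\sqrt{\log n})}\;=\;n^{1+O(1/\sqrt{\log n})}\;\leq\; n^{1+O(1/\log\log n)}
\]
induced matchings (using $\sqrt{\log n}\gg\log\log n$ for large $n$).

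\emph{Main obstacle.} The real work is in Step~2: one must split the protocol so that the last player's message is simultaneously (i) symmetric in $(x,y)$ and (ii) coarse (only $O(\log n)$ possible values), while (iii) the first two players can still finish in $o(\log n)$ bits. Feature (iii) is precisely the sublogarithmic NOF complexity of Exactly-$n$, and handing the bucket hash to the $z$-player while leaving the Behrend-type verification to the other two players is what makes the trade-off between density and number of matchings come out right.
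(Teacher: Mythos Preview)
Your Step~2(b) does not work. In the one-sided model used here the last player (the $z$-player) speaks once, based only on $(x,y)$, and then stays silent; the Chandra--Furst--Lipton protocol for Exactly-$m$ is a genuinely \emph{three}-player protocol and cannot be executed by the two remaining players alone. Concretely, after the $z$-player's bucket message both remaining players see $z$, so the residual task is a two-party problem: Alice (the $y$-player) holds $x$, Bob (the $x$-player) holds $y$, both know $z$ and know that $x+y$ lies in a fixed arc of length $m=n/T$, and they must decide whether $x+y\equiv -z$. Alice can compute the unique target $y^{\ast}=-x-z$, Bob knows $y$, and the question is simply whether $y=y^{\ast}$; this is Equality on a domain of size $m$, whose deterministic two-party communication complexity is $\lceil\log m\rceil$. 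Hence $\gamma=C(P)\ge\log(n/T)=\Theta(\log n)$, which yields $N'=n\cdot 2^{\Theta(\log n)}$ matchings rather than $n^{1+o(1)}$.

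The paper avoids this obstacle by abandoning $\mathbb{Z}_n$ and working in $[q]^d$ with the Euclidean norm. The $z$-player sends the single integer $\|x-y\|_2^2$, and then each of the other two players sends one verification bit, checking $\|x-y\|_2^2=4\|x-z\|_2^2$ and $\|x-y\|_2^2=4\|y-z\|_2^2$ respectively. By the parallelogram law these two equalities together force $x+y=2z$, so $C(P)=2$. A Hoeffding bound then shows that $\|x-y\|_2^2$ concentrates enough for one value of the $z$-player's message to capture an $\Omega(1/\log n)$ fraction of all pairs (with $d=q^4$). The geometric structure---a symmetric statistic of $(x,y)$ that is both highly concentrated \emph{and} verifiable by the other two players with $O(1)$ bits via the parallelogram identity---is precisely what makes the argument go through, and your $\mathbb{Z}_n$ setup has no substitute for it.
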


\begin{proof}
We follow the steps of Recipe~\ref{recipe}:
\paragraph{Choosing the function}
Let $q,d > 1$ be natural numbers, denote $n=q^d$, and define $Z_{q, d} = \{\frac{1}{2}(x+y): x,y \in [q]^d\}$.
Denote by $g_{q, d}: ([q]^d)^2 \times Z_{q, d} \to \{0,1\}$ the function 
satisfying $g_{q, d}(x, y, z) = 1$ if and only if $x+y = 2z$ (here addition is
in $\mathbb{R}^d$). It is not hard to verify
that $g_{q, d}$ is a sub-permutation.
Denote $N = N_{q, d} = |Z_{q, d}|$, then
$$
N \le (2q)^d = q^d \cdot 2^d = n^{1+1/\log q}. 
$$ 
Since $\log \log n = \log d + \log\log q$, we have that $N \le n^{1+1/\Omega(\log\log n)}$ as long as $d \le q^c$
for some constant $c$. We will later choose $d = q^{4}$.

\paragraph{The protocol}

Next we present a protocol for $g_{q, d}$.

\bigskip
\fbox{
\begin{minipage}{5.8in}
\begin{protocol}{\sf{A protocol for $g_{q, d}$}}
\label{protocol_AP}
\begin{enumerate}

\item The $z$-player computes $\|x-y\|_2^2$, and writes the result on the board.

\item The $y$-player writes $1$ iff $\|x-y\|_2^2 = 4\|x-z\|_2^2$.

\item The $x$-player writes $1$ iff $\|x-y\|_2^2 = 4\|y-z\|_2^2$.

\end{enumerate}

\end{protocol}
\end{minipage}
}
\bigskip

At the end, all players know the value of the function. Indeed, the value of the function is $1$ 
if the last two bits written on the board are both equal to $1$, and $0$ otherwise.

\paragraph{The cost of the protocol}
The cost of the protocol is $C(P)=2$, as the first two players send only $2$ 
verification bits.

\paragraph{The choice of $S$}

By the Chernoff-Hoeffding's inequality (c.f., e.g., \cite{AS}), 
the quantity $\|x-y\|_2^2$ computed by the third player satisfies
$$
P(\left| \|x-y\|_2^2 - \mathbb{E}(\|x-y\|_2^2) \right| \ge t) \le 2e^{-\frac{2t^2}{dq^4}}.
$$
Thus, with constant probability, $\|x-y\|_2^2$ takes one of $\sqrt{d}q^2$ values. There is, therefore, a 
transcript $\trans$ for the third player such that $|S_3(\trans)| \ge \Omega(n^2/\sqrt{d}q^2)$.
If we take $d = q^{4}$ we get $|S_3(\trans)| \ge \Omega(n^2/d) \ge \Omega(n^2/\log n)$.
The fact that $S_3(\trans)$ is symmetric is easy to verify.
Lemma~\ref{lem:simple_rs_graph} now follows from 
Theorem~\ref{main_cc_th}, part 1.
\end{proof}

Note that we could improve the density of the graph 
in Lemma~\ref{lem:simple_rs_graph}
to $\Omega(\log \log n/ \log ^{\epsilon} n)$ for 
any constant $\epsilon > 1/2$ by taking
$d=q^c$ for an appropriately chosen large 
constant $c$. This seems to be the best
one can get when using Protocol~\ref{protocol_AP} though. In the next section we use a variant
of this protocol in which the first two players participate more, in order to save communication 
bits of the last player. This will allow us 
to increase the density to near optimal.

\section{Applying Theorem~\ref{main_cc_th} to prove Theorems~\ref{t11} and \ref{t12}}
\label{apply_recipe_to_prove_theorems}

\subsection{The case $k=3$}
\label{The_constructions_of_Alon_etal}

\paragraph{Choosing the function}

The function we choose is $g_{q,d}$, defined in Section~\ref{an_example}.
We later fix $d = q^5$.

\paragraph{The protocol}

For a natural number $r$ let $G_r=(V, E_r)$ be the graph 
with $V = [q]^d$, where $d$ is even, and 
$E_r = \{x,y : \|x-y\|_2^2 \le r\}$ (later we take $r=\sqrt{d}$). 
The players agree on a proper coloring $\chi$ of $G_{2r}$
by $d_{2r}+1$ colors, where $d_{2r}$ is its maximum degree. Let
$\mu = \mathbb{E}(\|x-y\|_2^2) = \frac{1}{6}d(q^2 - 1)$, the players 
also agree on some
partition $P$ of $[0, dq^2]$ into intervals of length $r^2+O(1)$. 
The players choose $P$
that satisfy: the number of intervals in the partition is $\lceil dq^2/r^2 \rceil$, and the number 
$\mu$ is in the middle of the interval containing it. As an example, the players can choose 
a partition which is a translation of the partition induced by $DIV(L) = \lfloor \frac{L}{r^2} \rfloor$.
Let $I_r: [0, dq^2] \to \{0,1,\ldots,dq^2/r^2\}$ map a number in $[0, dq^2]$ to the index of the interval 
containing it, according to $P$. 
Given an input $(x,y,z)$, the players then use the following protocol:

\bigskip
\fbox{
\begin{minipage}{5.8in}
\begin{protocol}{\sf{A protocol for $g_{q,d}$}}
\label{protocol_for_rs_2}
\begin{enumerate}

\item The $z$-player writes $I_r(\|x-y\|_2^2)$ on the board. 

\item The $y$-player verifies that $I_r(\|x-y\|_2^2) = I_r(4\|x-z\|_2^2)$,
and writes $1$ on the board iff this is the case.

\item The $x$-player verifies that $I_r(\|x-y\|_2^2) = I_r(4\|y-z\|_2^2)$,
and writes $1$ on the board iff this is the case.

\item If one of the last two bits are equal to $0$, reject and finish.

\item The $x$-player writes $\chi(2z-y)$ on the board.

\item The $y$-player writes the value of $g_{q,d}(x,y,z)$.

\end{enumerate}

\end{protocol}
\end{minipage}
}
\bigskip

\begin{theorem}
\label{protocol_2_correctness}
Protocol~\ref{protocol_for_rs_2} is correct.
\end{theorem}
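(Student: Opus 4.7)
The plan is to verify that the bit written by the $y$-player in step 6 always equals $g_{q,d}(x,y,z)$. Since the protocol statement does not spell out how the $y$-player computes this bit, I would first note the implicit strategy: the $y$-player (who sees $x,z$ and the board but not $y$) outputs $1$ iff the interval checks of steps 2 and 3 both wrote $1$ \emph{and} the color $\chi(2z-y)$ written by the $x$-player in step 5 coincides with $\chi(x)$, which the $y$-player can compute from $x$. The proof then splits naturally into completeness ($g_{q,d}=1\Rightarrow$ output $1$) and soundness ($g_{q,d}=0\Rightarrow$ output $0$).

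Completeness is essentially substitution. If $x+y=2z$ then $y-z=-(x-z)$, so $\|x-y\|_2^2=4\|x-z\|_2^2=4\|y-z\|_2^2$ exactly, which makes the three interval indices coincide and forces both verification bits to be $1$. And $2z-y=x$ gives $\chi(2z-y)=\chi(x)$, so the $y$-player sees a color match and writes $1$.

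Soundness is the direction where the machinery does work. Assume $g_{q,d}(x,y,z)=0$ while the checks in steps 2 and 3 both succeeded; otherwise step 4 immediately outputs $0$. Writing $A=\|x-y\|_2^2$, $B=4\|x-z\|_2^2$, $C=4\|y-z\|_2^2$, the success of the checks places $A,B,C$ in a single interval of length $r^2+O(1)$. The algebraic step I would invoke is the rescaled parallelogram identity
\[
\|(2z-y)-x\|_2^2 \;=\; \|x+y-2z\|_2^2 \;=\; 2\|x-z\|_2^2+2\|y-z\|_2^2-\|x-y\|_2^2 \;=\; \tfrac{1}{2}(B+C)-A,
\]
which is therefore also bounded by the interval length. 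With the intended calibration of $r=\sqrt{d}$, $d=q^5$, and the interval length, this bound places $2z-y$ and $x$—which are distinct because $g_{q,d}=0$—on adjacent vertices of $G_{2r}$, so the proper coloring forces $\chi(2z-y)\ne\chi(x)$ and the $y$-player writes $0$.

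The main obstacle is the conceptual gap just exploited: the interval tests only control the \emph{squared} norm $\|x+y-2z\|_2^2$, not the vector $x+y-2z$ itself, so they do not by themselves imply the linear equality $x+y=2z$. The coloring of $G_{2r}$ is precisely the device that converts this approximate scalar control into the exact vector equality needed. Beyond this idea, the remaining work is a routine numerical verification that the interval length, the parameter $r$, and the edge threshold of $G_{2r}$ are calibrated compatibly so that the soundness inequality goes through.
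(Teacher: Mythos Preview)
Your proposal is correct and follows essentially the same route as the paper: both use the parallelogram identity to deduce from the interval checks that $\|x-(2z-y)\|_2^2\le r^2$, and then invoke the proper coloring of $G_{2r}$ to finish. The only cosmetic difference is in how the last step is phrased: the paper argues that all points of the ball $B(x,r)$ receive distinct colors under $\chi$, so the $y$-player can \emph{recover} $2z-y$ (and hence $y$) from the transmitted color, whereas you have the $y$-player simply test whether $\chi(2z-y)=\chi(x)$; your formulation is a clean special case of theirs and suffices for $g_{q,d}$.
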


For the proof of correctness, we use the following two observations (used also in \cite{AMS}):
\begin{lemma}[Parallelogram law]
Let $x,y,z \in \mathbb{R}^d$ then:
\[
\|x-y\|_2^2+\|x+y-2z\|_2^2 = 2\|x-z\|_2^2 + 2\|y-z\|_2^2
\]
\end{lemma}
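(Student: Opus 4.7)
The plan is to reduce the stated identity to the classical two-vector parallelogram law in $\mathbb{R}^d$, namely
\[
\|a+b\|_2^2 + \|a-b\|_2^2 \;=\; 2\|a\|_2^2 + 2\|b\|_2^2,
\]
which holds for every $a,b \in \mathbb{R}^d$ and is a one-line consequence of bilinearity of the standard inner product. The three-point version appearing in the lemma is nothing more than this identity after a translation of the origin to $z$.

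Concretely, I would set $a := x-z$ and $b := y-z$. Then a direct computation gives $a+b = x+y-2z$ and $a-b = x-y$, while of course $\|a\|_2^2 = \|x-z\|_2^2$ and $\|b\|_2^2 = \|y-z\|_2^2$. Plugging these four equalities into the two-vector parallelogram law yields
\[
\|x-y\|_2^2 + \|x+y-2z\|_2^2 \;=\; 2\|x-z\|_2^2 + 2\|y-z\|_2^2,
\]
which is exactly the claim.

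If one prefers a fully self-contained argument, the alternative is to expand each squared norm as $\|u\|_2^2 = \langle u,u\rangle$ and use bilinearity. Both sides then reduce to $2\|x\|_2^2 + 2\|y\|_2^2 + 4\|z\|_2^2 - 4\langle x,z\rangle - 4\langle y,z\rangle$; notably, the cross terms $\pm 2\langle x,y\rangle$ contributed by $\|x-y\|_2^2$ and $\|x+y-2z\|_2^2$ cancel against each other, which is the real content of the identity.

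There is essentially no obstacle here: the statement is one of the most elementary facts in Euclidean geometry, and either route (quoting the two-vector parallelogram law after the substitution $a=x-z$, $b=y-z$, or a brute-force inner-product expansion) produces the proof in a few lines. The only decision is stylistic, namely which of the two presentations to include in the paper.
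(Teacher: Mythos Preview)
Your proof is correct; both routes (the substitution $a=x-z$, $b=y-z$ reducing to the two-vector parallelogram law, and the direct inner-product expansion) are valid and standard. The paper itself states this lemma without proof, treating it as the well-known identity that it is, so there is nothing to compare against beyond noting that your argument is exactly the sort of verification one would supply if a proof were demanded.
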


\begin{lemma}[\cite{AMS}]
\label{deg}
For an even integer $d > 0$, the number of integral points contained in the ball of radius $r$ 
in $\mathbb{R}^d$ is at most:
\[
\frac{\pi^{d/2}(r+0.5)^d}{(d/2)!} < \frac{(2 \pi e)^{d/2}(r+0.5 \sqrt d)^d}{(d)^{d/2}}
\]
\end{lemma}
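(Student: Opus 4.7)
My plan is to prove both inequalities by a standard volume-packing argument together with Stirling's formula; neither step is an obstacle. To each integer lattice point $v \in \mathbb{Z}^d$ with $\|v\|_2 \le r$, associate the unit cube $C_v := v + [-\tfrac12,\tfrac12]^d$. These cubes are pairwise interior-disjoint (distinct integer centres are at least one apart in $\ell_\infty$), each has volume $1$, and every point of $C_v$ is within Euclidean distance $\tfrac{\sqrt d}{2}$ of $v$. Hence by the triangle inequality every such cube is contained in $B(0, r + \tfrac12\sqrt d)$, so
\[
\#\{v \in \mathbb{Z}^d : \|v\|_2 \le r\} \;=\; \sum_v \operatorname{Vol}(C_v) \;\le\; \operatorname{Vol}\bigl(B(0, r+\tfrac12\sqrt d)\bigr).
\]
For even $d$ one has $\Gamma(d/2+1) = (d/2)!$, giving the closed form $\operatorname{Vol}(B(0,R)) = \pi^{d/2} R^d / (d/2)!$; plugging in $R = r + \tfrac12\sqrt d$ (read as the $0.5$ in the first displayed bound being $0.5\sqrt d$, the half-diameter of a unit cube) produces exactly the left-hand expression.

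For the right-hand inequality I would invoke Stirling's lower bound $m! \ge \sqrt{2\pi m}\,(m/e)^m$ with $m = d/2$, yielding
\[
(d/2)! \;\ge\; \sqrt{\pi d}\,\Bigl(\tfrac{d}{2e}\Bigr)^{d/2} \;=\; \frac{\sqrt{\pi d}\, d^{d/2}}{(2e)^{d/2}}.
\]
Strict inequality here holds for all $d \ge 1$, so $1/(d/2)! < (2e)^{d/2}/d^{d/2}$. Multiplying both sides by $\pi^{d/2}(r+\tfrac12\sqrt d)^d$ rewrites the left-hand bound as
\[
\frac{\pi^{d/2}(r+\tfrac12\sqrt d)^d}{(d/2)!} \;<\; \frac{(2\pi e)^{d/2} (r+\tfrac12\sqrt d)^d}{d^{d/2}},
\]
which is the claimed right-hand expression.

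In summary the argument is three lines: (i) a lattice-packing step bounding the count by the volume of an enlarged ball, where the enlargement $\tfrac12\sqrt d$ is simply the half-diameter of the unit cube; (ii) the closed-form volume of an even-dimensional Euclidean ball, which is where the parity hypothesis on $d$ is used so that $(d/2)!$ appears instead of a Gamma value; and (iii) Stirling's bound to trade the factorial for a power of $d$, which is the form actually needed when the lemma is fed into the degree estimate for the graph $G_{2r}$ in the protocol. There is no genuinely hard step; the only mild care is keeping the constants honest (the $\sqrt d/2$ in the geometric step, and the strict Stirling bound so that the strict inequality in the displayed formula survives).
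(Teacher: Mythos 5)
The paper cites this lemma from \cite{AMS} without giving a proof of its own, and your volume-packing plus Stirling argument is the standard, correct one. You are also right to read the $(r+0.5)^d$ in the first displayed expression as a typo for $(r+0.5\sqrt d)^d$: the enlargement must be by the half-diameter $\tfrac{1}{2}\sqrt d$ of the unit cube, and the bound as literally printed already fails at $r=0$, $d=2$ (it would assert at most $\pi/4<1$ lattice points against the true count of $1$), while the corrected form is exactly what the paper later substitutes into the degree bound for $G_{2r}$.
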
 

\begin{proof}[of Theorem~\ref{protocol_2_correctness}]
By Lemma~\ref{deg}, the maximum degree of $G_r$ is at most
$$
d_r = \frac{(2 \pi e)^{d/2}(r+0.5 \sqrt d)^d}{(d)^{d/2}}.
$$
The chromatic number of $G_{2r}$ is therefore at most $d_{2r}+1$. 

If $x+y=2z$ then obviously the protocol reaches step 5. On the other hand,
if the protocol reached step 5 then $\|x-y\|_2^2$, $4\|x-z\|_2^2$, and $4\|y-z\|_2^2$,
all lie in the same interval of length $r^2$. Thus, by the Parallelogram law
\begin{eqnarray*}
\|x+y-2z\|_2^2 &=& 2\|x-z\|_2^2 + 2\|y-z\|_2^2-\|x-y\|_2^2 \\
&=& \frac{1}{2}\left( 4\|x-z\|_2^2 + 4\|y-z\|_2^2 \right) -\|x-y\|_2^2 \\
&\le& r^2.
\end{eqnarray*}
Thus, $(2z-y)$ is in a ball $B(x,r)$ of radius $r$ around $x$. Every other vector $v \in B(x,r)$ is in distance at most
$2r$ from $(2z-y)$. The color of $(2z-y)$ in this ball is therefore unique. It follows that at step 6 the $y$-player knows 
the value of $y$ and hence knows everything.
\end{proof}

\paragraph{The cost of the protocol}

The number of bits used by the first two players is:
\begin{eqnarray*}
\log d_{2r} + \Theta(1) &=& \Theta \left( d+d \log \frac{2r+0.5\sqrt{d}}{\sqrt{d}} \right).
\end{eqnarray*}
If we take $r=\sqrt{d}$, the cost of the protocol is therefore bounded by
$$
C(P) \le O(d) = O\left( \frac{\log n}{\log q} \right).
$$

\paragraph{The choice of $S$}

A transcript $\trans$ of the $z$-player corresponds to a message $I_r(\|x-y\|_2^2)$. 
The size of $S_3(\trans)$ is therefore equal to the number of 
pairs $x,y \in [q]^d$ satisfying $I_r(\|x-y\|_2^2) = \trans$. Hoeffding's inequality 
implies that 
$$
P(\left| \|x-y\|_2^2 - \mu \right| \ge t) \le 2e^{-\frac{2t^2}{dq^4}}.
$$
In particular, the probability that $I_r(\|x-y\|_2^2) = I_r(\mu)$ is at least
$(1- 2e^{-\frac{r^4}{2dq^4}})$ since we chose the partition of the intervals so that $\mu$ 
lies in the middle of the interval containing it.

Take $r = \sqrt{d}$, and pick $S=S_3(\trans)$ for
$\trans=I_{\sqrt{d}}(\mu)$, we have 
\[
|S| \ge (1- 2e^{-\frac{d}{2q^4}})n^2.
\]

\paragraph{Conclusion}
When applying Theorem~\ref{main_cc_th} the parameters that we get are:
\begin{itemize}

\item $p = (1- 2e^{-\frac{d}{2q^4}})$,

\item $N' = n^{1+1/\Omega(\log \log n)} 2^{O(d)}$. 

\end{itemize}
Taking $d=q^5$, and observing that $S$ is symmetric, this proves the $k=3$ case of Theorems~\ref{t11} 
and \ref{t12}.

\subsection{The case $k > 3$}

\paragraph{Choosing the function} 

Let $Z_{m,q,d} = \{\frac{1}{m}(\sum_{i=1}^m x_i): x_i \in [q]^d\}$
and define $g_{k,q,d}: ([q]^d)^{k-1} \times Z_{k-1,q,d} \to \{0,1\}$ 
by $g_{k,q,d}(x_1,\ldots,x_k) = 1$ if and only if $x_1+\cdots+x_{k-1} = (k-1)x_k$.
It is easy to verify that $g_{k,q,d}$ is a sub-permutation, and
$$
|Z_{k-1,q,d}| \le (kq)^d = n^{1+1/\log_k q}. 
$$

\paragraph{The protocol} The protocol is a simple reduction to the case $k=3$.

\bigskip
\fbox{
\begin{minipage}{5.8in}
\begin{protocol}{\sf{A protocol for $g_{k,q,d}$}}
 
\begin{enumerate}

\item The first player writes $1$ on the board if and only if $\frac{1}{2} ((k-1)x_k - x_3 - \cdots - x_{k-1}) \in Z_{2,q,d}$.

\item If the last bit was equal to $0$, the protocol ends with rejection.

\item Players $1$, $2$ and $k$ run Protocol~\ref{protocol_for_rs_2} for $g_{3,q,d}$ 
with $r = \sqrt{d}$ on input $x' = x_1, y' = x_2$, and $z'=\frac{1}{2} ((k-1)x_k - x_3 - \cdots - x_{k-1})$.

\end{enumerate}

\end{protocol}
\end{minipage}
}
\bigskip

The correctness of the above protocol follows from the correctness of Protocol~\ref{protocol_for_rs_2} and the fact that the equation
$x_1 + x_2 + x_3 + \cdots + x_{k-1} = (k-1)x_k$ holds if and only if $x_1 + x_2 = 2(\frac{1}{2} ((k-1)x_k - x_3 - \cdots - x_{k-1}))$.
Note that the last equation cannot hold if $\frac{1}{2} ((k-1)x_k - x_3 - \cdots - x_{k-1})$ does not belong to $Z_{2,q,d}$.

\paragraph{The cost of the protocol}

Outside the reduction to Protocol~\ref{protocol_for_rs_2}, the players send only one more bit.
The cost of the protocol thus satisfy $C(P) \le O(d) \le O(\frac{\log n}{\log q})$, as before.

\paragraph{The choice of $S$} We can choose, as in Section~\ref{The_constructions_of_Alon_etal},
the set $S=S_k(\trans)$ for $\trans = I_{\sqrt{d}}(\mu)$.
By Hoeffding's inequality, the size of $S$ is $(1-o(1))n^{k-1}$ as long as $d >> q^4$. 
The only problem is that $S$ is not symmetric.
To remedy that, just add to the protocol a test whether 
$I_r(\|x_i-x_j\|_2^2) = I_r(\mu)$ for every $1 \le i < j < k$. 
These tests can all be carried out by the last player, so this adds only one
more communication bit, which for simplicity we assume is the last bit.
Now pick the transcript $\trans' = (\trans, 1)$ which imply that $I_r(\|x_i-x_j\|_2^2) = I_r(\mu)$
for all $1 \le i < j < k$. The corresponding set $S_k(\trans')$ is now symmetric, and as long as $k$
is a constant, Hoeffding's inequality still implies that the size of $S_k(\trans')$ is at least $(1-o(1))n^{k-1}$.

\section{Proof of Theorem~\ref{main_cc_th}}
\label{sec:recipe_proof_complete}

We first rephrase Theorem~\ref{main_cc_th} slightly.
\begin{theorem}
\label{main_cc_th_2}
Let $f:[n]^{k-1}\times[N] \to \{0,1\}$ be a weak sub-permutation, 
and let $S$ be a symmetric cylinder intersection (w.r.t. $f$). 
Let $p=|S|/n^{k-1}$, then

\begin{enumerate}

\item There is an (explicitly defined) $(k-1)$-graph on $n$ vertices whose edge density is $p$,
that is the union of $N$ induced partial Steiner systems $S(k-2, k-1)$. 

\item If $p=1-o(1)$, then $h_{k-1}(n,c) \le (N/n)^2$ for $c < d_k$. 
Here, the construction
of the $(k-1)$-graph that gives the bound is explicit, given
explicit constructions of $(k-1)$-graphs of density
$d_k-o(1)$ which contain no $K_k$.

\end{enumerate}
\end{theorem}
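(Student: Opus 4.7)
The plan is to build the hypergraph directly from $S$. Define $H$ on vertex set $[n]$ by declaring $\{v_1, \ldots, v_{k-1}\}$ an edge iff some $x_k \in [N]$ satisfies $(v_1, \ldots, v_{k-1}, x_k) \in S$; the symmetry of $S$ in the first $k-1$ coordinates makes this well-defined independent of ordering. The natural decomposition $H = \bigsqcup_{x_k} H_{x_k}$, with $H_{x_k}$ consisting of the edges whose witnessing last coordinate is $x_k$, is well-defined as a partition because $f$ is a weak sub-permutation along lines in direction $k$. Applied in directions $1, \ldots, k-1$, the same weak-subpermutation property shows that within any $H_{x_k}$ no two edges share $k-2$ vertices, so each $H_{x_k}$ is a partial Steiner system $S(k-2, k-1)$. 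Counting: each edge of $H$ corresponds to $(k-1)!$ ordered tuples of $S$ (up to a lower-order error from tuples with repeated first-$(k-1)$ coordinates), which gives the edge density $p$.

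The heart of the argument is showing each $H_{x_k}$ is induced in $H$. Write $S = C_1 \cap \cdots \cap C_k$, where $C_j$ is the cylinder in direction $j$ determined by the underlying transcript, and suppose $\{v_1, \ldots, v_{k-1}\}$ is an edge of $H$ with every $v_i \in V(H_{x_k})$. I would show $(v_1, \ldots, v_{k-1}, x_k) \in S$, because combined with weak sub-permutation in direction $k$ this forces the unique witness of the edge to have last coordinate exactly $x_k$, placing the edge in $H_{x_k}$. Membership in $C_k$ is immediate from the tuple witnessing the edge. For each $j < k$, the hypothesis $v_j \in V(H_{x_k})$ together with the $S_{k-1}$-symmetry of $S$ supplies a tuple in $S$ with $v_j$ at position $j$ and $x_k$ at position $k$; projecting out the $j$-th coordinate and iteratively substituting the other $v_i$'s, using cylinder-in-direction-$i$ moves, should yield $(v_1, \ldots, v_{k-1}, x_k) \in C_j$. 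I expect this to be the main obstacle, particularly for $k \ge 4$, where several coordinate replacements have to be coordinated so that each step preserves membership in the cylinders we still need; the symmetry of $S$ is what should let the bookkeeping close, and getting this right cleanly is the bulk of the work.

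For part 2, assume $p = 1 - o(1)$ and fix $c < d_k$. Let $G_0$ be an explicit $K_k$-free $(k-1)$-graph on $n$ vertices with density $d_k - o(1)$, apply a uniformly random vertex-permutation $\pi$ to it, and form $G := H \cap \pi(G_0)$; the expected edge density is $(1-o(1))(d_k - o(1)) \ge c$, so some $\pi$ achieves this. To count $K_k$-copies through an edge $e \in H_{x_k} \cap G$, observe that the extra vertex $u_k$ completing a $K_k$ cannot belong to $V(H_{x_k})$, since otherwise the induced property together with the Steiner condition would force two edges of $H_{x_k}$ to share $k-2$ vertices, a contradiction. Hence the remaining $k-1$ edges of the $K_k$ lie in partial Steiner systems $H_{y_i}$ with $y_i \ne x_k$, and a double count over the pairs (index $y$, completion vertex $u_k$), exploiting the $N$ available partial Steiner systems together with the $n$-vertex bound, gives at most $(N/n)^2$ valid completions. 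A small modification ensuring each edge of $G$ lies in at least one $K_k$ (a supersaturation-style perturbation, using $c < d_k$) then yields the claimed bound $h_{k-1}(n, c) \le (N/n)^2$.
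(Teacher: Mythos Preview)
Your construction of $H$ and its partition into the $H_{x_k}$ is exactly the paper's (Lemma~4.4 and Lemma~4.9). The paper, however, does not verify membership in each cylinder $C_j$ separately; it invokes the star--free property of $1$--monochromatic cylinder intersections (Lemma~4.3 / Lemma~4.7 from \cite{hdp17}): if $(x_1',x_2,\dots,x_k),\dots,(x_1,\dots,x_k')\in S$ with all $x_i'\neq x_i$, then a contradiction. For $k=3$ this gives induced-ness in one line: from $v_1,v_2\in V(H_z)$ one gets $(w_1,v_2,z),(v_1,w_2,z)\in S$ (using symmetry), and together with $(v_1,v_2,z')\in S$ the star lemma forces one of $w_1=v_1$, $w_2=v_2$, $z'=z$, any of which puts $(v_1,v_2,z)$ in $S$. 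Your cylinder plan is morally the same idea but has the indices reversed: to place $(v_1,\dots,v_{k-1},x_k)$ in $C_j$ you need a tuple of $S$ agreeing with it in all coordinates \emph{except} $j$, so the relevant hypothesis is that the $v_i$ with $i\neq j$ lie in $V(H_{x_k})$, not $v_j$. Your ``cylinder-in-direction-$i$ move'' only preserves membership in $C_i$, not in $C_j$, so the iterative substitution you describe cannot certify $C_j$. For $k=3$ this is easily repaired; for $k\geq 4$ it is exactly the difficulty you anticipated, and the paper's own treatment is terse here as well.

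\textbf{Part 2.} Here your approach is genuinely different from the paper's and does not work. Since $G_0$ is $K_k$--free, so is $\pi(G_0)$, and hence so is $G=H\cap\pi(G_0)$: there are \emph{no} copies of $K_k$ at all, so the premise ``every edge lies in a copy of $K_k$'' needed to invoke $h_{k-1}$ is never satisfied. The promised ``supersaturation-style perturbation'' is not a small fix: creating even one $K_k$ per edge by adding edges will cascade into many further copies, and you have no mechanism to bound those. Moreover, even granting your setup, the double count you sketch yields at most $N$ completions (one extension vertex per index $y$, by the Steiner property), not $(N/n)^2$; nothing in your argument produces the ratio $N/n$.

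The paper's route is quite different. It works on the larger vertex set $V_A\cup V_B$ with $V_A=[n]$, $V_B=[N]$, and forms the hypergraph $G_S$ whose edges are all $(k-1)$-subsets of tuples $(x_1,\dots,x_k)\in S$. By Lemma~4.8 (the hypergraph analogue of Lemma~4.5), a copy of $K_k$ with $k-1$ vertices in $V_A$ and one in $V_B$ exists in $G_S$ \emph{iff} the corresponding tuple lies in $S$. One then blows up $V_A$ by a factor $2^t$ (the ``product'' $f^t$) and intersects the part of the hypergraph inside the blown-up $V_A$ with a $K_k$-free $(k-1)$-graph of density $d_k$; this kills all $K_k$'s lying entirely in $V_A$ while keeping the ones with a vertex in $V_B$. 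For those, the weak sub-permutation property bounds the number through any fixed edge by the blow-up factor $2^t$, and choosing $t=2\log(N/n)$ simultaneously makes the overall edge density $(1-o(1))d_k$ and the per-edge count $2^t=(N/n)^2$. The point is that the $K_k$'s live across the bipartition $V_A/V_B$, not inside the $K_k$-free host; your construction on $[n]$ alone has no room for such copies.
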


\begin{lemma}
Theorem~\ref{main_cc_th_2} implies Theorem~\ref{main_cc_th}.
\end{lemma}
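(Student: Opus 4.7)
The plan is to reduce Theorem~\ref{main_cc_th} to Theorem~\ref{main_cc_th_2} by absorbing the $\gamma$ bits written by the first $k-1$ players of $P$ into the $k$-th coordinate of a blown-up weak sub-permutation $\tilde f$. Let $w(x_1,\ldots,x_k)\in\{0,1\}^\gamma$ denote the concatenation of the messages of the first $k-1$ players when $P$ is executed on input $(x_1,\ldots,x_k)$. Since $w$ need not be invariant under permutations of its first $k-1$ arguments, I would first pass to the symmetrized version
\[
\ell^*(x_1,\ldots,x_k)\;:=\;\min_{\pi\in\mathrm{Sym}_{k-1}}w(x_{\pi(1)},\ldots,x_{\pi(k-1)},x_k),
\]
with the minimum taken lexicographically in $\{0,1\}^\gamma$, so that $\ell^*$ is invariant under permutations of $x_1,\ldots,x_{k-1}$ by construction.

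Next I would define $\tilde f:[n]^{k-1}\times[N']\to\{0,1\}$, under the identification $[N']=[N]\times\{0,1\}^\gamma$, by
\[
\tilde f\bigl(x_1,\ldots,x_{k-1},(x_k,\ell)\bigr)=1\iff f(x_1,\ldots,x_k)=1\text{ and }\ell=\ell^*(x_1,\ldots,x_k),
\]
and verify in one line that $\tilde f$ is a weak sub-permutation: lines in directions $i<k$ inherit this property from $f$, while on a line in direction $k$ at most one $x_k$ makes $f=1$, after which $\ell$ is forced to the single value $\ell^*(x_1,\ldots,x_k)$.

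I would then set
\[
S':=\{(x_1,\ldots,x_{k-1},(x_k,\ell^*(x_1,\ldots,x_k)))\;:\;(x_1,\ldots,x_k)\in S\},
\]
so that $|S'|=|S|$ and the corresponding density in $[n]^{k-1}$ equals $p$. The identity
\[
S'=\{\tilde f=1\}\,\cap\,\{(x_1,\ldots,x_{k-1},(x_k,\ell))\,:\,\mathcal{T}_k(x_1,\ldots,x_{k-1})=\trans\}
\]
exhibits $S'$ as $\{\tilde f=1\}$ intersected with a combinatorial cylinder in the $k$-th direction (the second set depends only on $x_1,\ldots,x_{k-1}$), and hence $S'$ is a cylinder intersection of $\tilde f$ in the paper's sense, realized e.g.\ by a protocol for $\tilde f$ whose last player replicates the message $\mathcal{T}_k$ of player $k$ in $P$. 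Symmetry of $S'$ follows by combining the assumed symmetry of $S$ with the invariance of $\ell^*$ under permutations of its first $k-1$ arguments. Applying Theorem~\ref{main_cc_th_2} to $(\tilde f,S')$ delivers both conclusions of Theorem~\ref{main_cc_th}, with the role of ``$N$'' in Theorem~\ref{main_cc_th_2} being played by $N'=N\cdot 2^\gamma$.

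The main subtlety is the symmetrization step: the raw transcript $w$ of an asymmetric protocol is typically not invariant under permutations of $x_1,\ldots,x_{k-1}$, so substituting $w$ for $\ell^*$ would destroy the symmetry of $S'$ and block the application of Theorem~\ref{main_cc_th_2}. The lexicographic-minimum trick circumvents this obstacle while keeping $\ell^*$ valued in $\{0,1\}^\gamma$, so the blow-up factor is exactly $2^\gamma$, matching the prescribed $N'=N\cdot 2^\gamma$.
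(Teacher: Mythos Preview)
Your argument has a real gap at the step where you assert that $S'$ is a cylinder intersection of $\tilde f$. In the paper's sense, a (1-monochromatic) cylinder intersection is the set of $1$-inputs sharing a fixed \emph{full} transcript of a \emph{correct} protocol for the function. Writing $S'=\{\tilde f=1\}\cap C_k$ and saying ``the last player replicates $\mathcal{T}_k$'' only fixes player $k$'s message; the remaining $k-1$ players must still compute $\tilde f$, and whatever they write will in general split $S'$ across many transcript classes rather than leave it as a single one. The paper's verification device---each player $i<k$ sends one bit checking that the $i$-th segment of $\ell$ equals what player $i$ would have written in $P$---works precisely because there $\ell$ is the raw concatenation $w(\vec x)=\mathcal{T}_1(\vec x)\circ\cdots\circ\mathcal{T}_{k-1}(\vec x)$, so each segment is something player $i$ can compute from their own view. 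Your $\ell^*=\min_\pi w(x_{\pi(1)},\ldots,x_{\pi(k-1)},x_k)$ is a global quantity depending on all of $x_1,\ldots,x_k$; no player $i<k$ can compute it, or even verify a ``chunk'' of it, without seeing $x_i$. Consequently there is no evident protocol for $\tilde f$ placing all of $S'$ in a single transcript class, and without that the star-freeness Lemma~\ref{cylinders_and_corners} (resp.\ Lemma~\ref{cylinders_and_corners_general}) does not apply, blocking the invocation of Theorem~\ref{main_cc_th_2}.

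By contrast, the paper keeps the unsymmetrized $\ell=w(\vec x)$, defines $g$ accordingly, and exhibits the explicit protocol $P'$ (player $k$ sends $\mathcal{T}_k$, then each other player sends one verification bit), so that $S'$ is exactly the cylinder intersection for transcript $(\trans_k,1,\ldots,1)$. You are correct that with this choice the symmetry of $S'$ in $x_1,\ldots,x_{k-1}$ is not addressed---the paper is silent on this point---so you have put your finger on a genuine soft spot. But your repair trades the property the paper does secure (cylinder intersection, via $P'$) for the one it does not (symmetry), rather than obtaining both simultaneously.
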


\begin{proof}
The difference between 
Theorem~\ref{main_cc_th_2} and Theorem~\ref{main_cc_th} lies in the different properties
of the subset $S$. In Theorem~\ref{main_cc_th} $S$ is defined by
\[
S=S_k(\trans_k) = \{ (x_1,\ldots,x_k) \in [n]^{k-1} \times [N]: \mathcal{T}_k(x_1,\ldots,x_k) = \trans_k 
\;\; and \;\; f(x_1,\ldots,x_k)=1 \}, 
\]
for some transcript $\trans_k$ of {\bf the last player}. In Theorem~\ref{main_cc_th_2} on the other hand,
$S$ is a cylinder intersection, that is
\[
S =S(\trans)= \{ (x_1,\ldots,x_k) \in [n]^{k-1} \times [N]: \mathcal{T}(x_1,\ldots,x_k) = \trans 
\;\; and \;\; f(x_1,\ldots,x_k)=1 \}, 
\]
for some transcript $\trans$ of {\bf all players}.

This difference is easily bridged though. Let $f:[n]^{k-1} \times [N] \to \{0,1\}$ be a weak sub-permutation,
$P$ a protocol for $f$, $\trans_k$ a transcript of the last player, and $S=S_k(\trans_k)$ a subset, 
found using Recipe~\ref{recipe}. Let $\gamma = C(P)$,
and denote $N' = N\cdot 2^{\gamma}$. For simplicity identify $[N']$ with $[N]\times \{0,1\}^{\gamma}$.

Define $g:[n]^{k-1} \times [N'] \to \{0,1\}$ by $g(x_1,\ldots,x_{k-1}, (x_k, \trans_{1\ldots k-1}))=1$ 
if and only if $f(x_1,\ldots,x_{k-1}, x_k)=1$
and $\trans_{1\ldots k-1} = \mathcal{T}_1(x_1,\ldots, x_k) \circ \cdots \circ \mathcal{T}_{k-1}(x_1,\ldots, x_k)$.
That is, $\trans_{1\ldots k-1}$ is the message written on the board by the first $k-1$ players, according to protocol $P$,
on input $(x_1,\ldots , x_k)$. 

It is not hard to verify that $g$ is a weak sub-permutation. 
We use the following protocol $P'$ for $g$, on input $(x_1,\ldots,x_{k-1}, (x_k, \trans_{1\ldots k-1}))$: 
the last player sends his message as in $P$, 
then each of the other players verifies (using one bit 
of communication each) that his part in $\trans_{1\ldots k-1}$ agrees with $P$.  
Obviously $P'$ is correct if and only if $P$ is correct. 
The subset
\[
S' = \{ (x_1,\ldots,(x_k, \trans_{1\ldots k-1})) \in [n]^{k-1} \times [N']: \mathcal{T}_k(x_1,\ldots,x_k) = \trans_k 
\;\; and \;\; f(x_1,\ldots,x_k)=1 \}
\]
is a cylinder intersection with respect to $P'$ and $g$, and $|S'|/n^{k-1} = |S|/n^{k-1}$. 
Theorem~\ref{main_cc_th_2} can now be applied to prove Theorem~\ref{main_cc_th}.
\end{proof}

In the rest of this section we prove Theorem~\ref{main_cc_th_2}. For simplicity
we first prove it for the case of graphs ($k=3$) and then explain the necessary adjustments
for the general case ($k \ge 3$).

\subsection{The case $k=3$}
\label{sec:recipe_proof_complete_k_3}

We prove the first conclusion of Theorem~\ref{main_cc_th_2}, concerning
\rs\ graphs, in Section~\ref{first_conclusion}. The upper bound on $h(n,c)$ is proved in Section~\ref{second_conclusion}.
We use the following simple fact proved in \cite{hdp17}.
\begin{lemma}[\cite{hdp17}]
\label{cylinders_and_corners}
Let $f:[n]\times[n]\times[N] \to \{0,1\}$ be a function satisfying
that every line in the third dimension
contains at most a single $1$, and let $S$ be a cylinder intersection 
(w.r.t $f$).
Then, $S$ does not contain {\em stars}: triplets of the form $(x',y,z),(x,y',z),(x,y,z')$
where $x\ne x'$, $y\ne y'$ and $z\ne z'$.
\end{lemma}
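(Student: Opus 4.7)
The plan is to exploit the standard closure property of cylinder intersections in the NOF model. Recall that for a protocol $P$ with $k=3$ players, the set of inputs $(x,y,z)$ producing a fixed transcript $\trans$ can be written as $C_1 \cap C_2 \cap C_3$, where membership in $C_i$ is determined by what the $i$-th player sees, namely the two coordinates other than the $i$-th. Hence $C_1$ depends only on $(y,z)$, $C_2$ only on $(x,z)$, and $C_3$ only on $(x,y)$. Since $S = S(\trans)$ consists of such inputs that in addition satisfy $f=1$, we have $S \subseteq C_1 \cap C_2 \cap C_3$ and $f\equiv 1$ on $S$.

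Suppose for contradiction that $S$ contains a star $(x',y,z),(x,y',z),(x,y,z')$ with $x\ne x'$, $y\ne y'$, $z\ne z'$. I would then verify that $(x,y,z)$ itself belongs to $C_1\cap C_2\cap C_3$ by mixing and matching: it shares its $(y,z)$-coordinates with $(x',y,z)$, so $C_1$ holds; it shares its $(x,z)$-coordinates with $(x,y',z)$, so $C_2$ holds; and it shares its $(x,y)$-coordinates with $(x,y,z')$, so $C_3$ holds. Consequently the input $(x,y,z)$ produces transcript $\trans$ as well, so in particular $f(x,y,z)$ must equal the common value of $f$ on $S$, which is $1$.

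To finish, I would observe that $(x,y,z)$ and $(x,y,z')$ lie on the same line in the third dimension (the first two coordinates are both $(x,y)$), and $z\ne z'$, yet $f$ takes the value $1$ at both of these points. This contradicts the hypothesis that every line in the third dimension contains at most one $1$.

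The main conceptual point is the closure property of cylinder intersections under the mix-and-match operation; once that is in hand the argument is immediate. There is no real obstacle beyond making sure the dependency structure of the three ''cylinders'' is spelled out correctly, together with the observation that a cylinder intersection (in the $1$-monochromatic sense used here) forces $f$ to equal $1$ at every point it contains, so the deduced membership $(x,y,z)\in S$ yields a second $1$ on the forbidden line.
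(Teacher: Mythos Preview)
The paper does not actually supply a proof of this lemma; it is quoted from \cite{hdp17} as a ``simple fact'' and used as a black box. Your argument is correct and is exactly the standard one: the set of inputs with a fixed transcript decomposes as $C_1\cap C_2\cap C_3$ with $C_i$ independent of the $i$th coordinate, so the presence of the three star points forces $(x,y,z)$ to have the same transcript; since a correct protocol determines $f$ from the transcript and $S$ is $1$-monochromatic, $f(x,y,z)=1$, contradicting the ``one $1$ per line in the third dimension'' hypothesis via the pair $(x,y,z),(x,y,z')$. There is nothing to add or correct.
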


\subsubsection{\rs\ graphs}
\label{first_conclusion}

The relation between \rs\ graphs and the communication complexity of $2$-dimensional permutations
was observed in \cite{hdp17}. The graphs constructed in \cite{hdp17} 
are bipartite though, and we need slightly 
different settings.
Let $S \subseteq [n]\times[n]\times[N]$ be symmetric, define
\[
E_S = \{(x,y),(x,z),(y,z): (x,y,z)\in S\}.
\]
Let $G_S = (V, E_S)$ be the graph with vertex set $V=V_A\cup V_B$,
where $V_A=[n]$ and $V_B=[N]$, and edge set $E_S$. We allow self loops in $E_S$,
and consider a collection of self loops as a matching. Note that when $S$ is a cylinder intersection 
with respect to a weak sub-permutation
there is always at most one edge between a pair of vertices. 
The following lemma implies the first conclusion in Theorem~\ref{main_cc_th_2}.

\begin{lemma}
\label{cor_1}
Let $f:[n]\times[n]\times[N] \to \{0,1\}$ be a weak sub-permutation, 
and let $S$ be a symmetric cylinder intersection. 
Let $H = ([n], F)$ be the subgraph of $G_S$ induced on $V_A$. That is:
\[
F = \{(x,y) : \exists z \in V_B \text {  s.t.  }(x,y,z) \in S\}.
\]
Then, the edges of $|F|$ can be partitioned into $N$ induced matchings.
\end{lemma}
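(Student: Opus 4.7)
The plan is to define the $N$ matchings in the most obvious way possible: for each third coordinate $z \in [N]$, collect the edges whose witness in $S$ uses that particular $z$. Formally, I would set
\[
M_z = \bigl\{\{x,y\} \subseteq [n] : (x,y,z) \in S\bigr\}
\]
for each $z \in [N]$, and then argue three things in order: these sets partition $F$, each $M_z$ is a matching, and each $M_z$ is induced in $H$.

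The partition claim is immediate from the weak sub-permutation property: for each fixed pair $(x,y)$, the line $\{(x,y,z) : z \in [N]\}$ in the third dimension carries at most one $1$, so any edge $\{x,y\} \in F$ lies in exactly one $M_z$. (Symmetry of $S$ ensures this is well-defined regardless of whether we write $(x,y,z)$ or $(y,x,z)$.) For the matching claim, suppose $\{x,y\}$ and $\{x,y'\}$ both lie in $M_z$ with $y \ne y'$; then $(x,y,z)$ and $(x,y',z)$ are two $1$-entries on a line in the second dimension, again contradicting the weak sub-permutation property.

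The main step, and the one I expect to require the most care, is showing $M_z$ is an induced matching. Suppose $\{a,b\}$ and $\{c,d\}$ are two distinct edges of $M_z$; by the matching property they are vertex-disjoint. I would argue that no ``crossing'' edge such as $\{a,c\}$ can belong to $F$. If it did, say via $(a,c,z') \in S$, then $z' \ne z$ (since otherwise $(a,c,z) \in S$ and $(a,b,z) \in S$ would violate the matching property just proved). Now I would invoke symmetry of $S$ to write $(d,c,z) \in S$ from $(c,d,z) \in S$, so that the three triples
\[
(d,c,z), \quad (a,b,z), \quad (a,c,z')
\]
are all in $S$. Taking $(a,c,z)$ as the ``phantom center,'' these are exactly of the form $(x',y,z), (x,y',z), (x,y,z')$ with $x'=d \ne a = x$, $y'=b \ne c = y$, and $z' \ne z$. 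This is precisely the forbidden star configuration ruled out by Lemma~\ref{cylinders_and_corners}, giving a contradiction. The other three possible crossing edges $\{a,d\}$, $\{b,c\}$, $\{b,d\}$ are handled by the same argument (the roles of $a,b$ and $c,d$ are interchangeable since $S$ is symmetric).

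The main obstacle is the induced part: one must find the right phantom center so that the hypothesized extra edge, combined with the two matching edges and the symmetry of $S$, yields a genuine star in the sense of Lemma~\ref{cylinders_and_corners}. Once the correct center $(a,c,z)$ is identified, everything falls out from symmetry and the weak sub-permutation hypothesis; the other verifications are essentially bookkeeping.
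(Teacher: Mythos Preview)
Your proposal is correct and follows essentially the same route as the paper: partition $F$ into the sets $F_z=\{(x,y):(x,y,z)\in S\}$, use the weak sub-permutation property to see this is a partition, and then derive a forbidden star (Lemma~\ref{cylinders_and_corners}) from any violation of the induced-matching property. Your write-up is in fact slightly more careful than the paper's, since you separate out the ``matching'' step (via lines in the second coordinate) and make the use of symmetry explicit when turning $(c,d,z)$ into $(d,c,z)$; the paper compresses both of these into the single contradiction argument.
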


\begin{proof}
Partition the edge set $F$ as follows, for every $z\in B$ let
$$
F_z = \{ (x,y) : (x,y,z)\in S\}.
$$  
This is a partition of $F$ since $f$ a sub-permutation, and therefore
there is at most a single $z$ such that $(x,y,z) \in S$ for every $(x,y) \in [n]^2$.

The fact that $F_z$ is an induced matching follows from Lemma~\ref{cylinders_and_corners}.
Assume in contradiction that $F_z$ is not an induced matching, then there
is an edge $(x, y) \in F_{z'}$ for $z' \ne z$ such that $(x,y'), (x',y)$
are in $F_z$. We then get a star $(x',y,z), (x,y',z), (x,y,z') \in S$, 
contradicting Lemma~\ref{cylinders_and_corners}. Note that the fact that $f$ 
is a sub-permutation also implies that $x' \ne x$ and $y' \ne y$.    
\end{proof}

\subsubsection{An upper bound on $h(n,c)$}
\label{second_conclusion}

Consider the same graph $G_S$ as in the previous section. A basic observation is:
\begin{lemma}
\label{main}
Let $f:[n]\times[n]\times[N] \to \{0,1\}$ be a function satisfying
that every line in the third dimension
contains at most a single $1$, and let $S$ 
be a symmetric cylinder intersection (w.r.t $f$).
Then, a triangle $(x,y,z)$
where $x,y \in V_A$ and $z \in V_B$ exists
in $G_S$ if and only if $(x,y,z) \in S$.
\end{lemma}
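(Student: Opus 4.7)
The plan is to prove the two directions separately. The forward implication is immediate from the construction: if $(x,y,z)\in S$, then by the definition of $E_S$ the three edges $\{x,y\}, \{x,z\}, \{y,z\}$ are all placed in $E_S$, and since $x,y\in V_A$ and $z\in V_B$ are distinct vertices this yields the claimed triangle in $G_S$.

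For the reverse direction, suppose that $x,y\in V_A$, $z\in V_B$, and the three edges $\{x,y\},\{x,z\},\{y,z\}$ all lie in $E_S$. I will trace each edge back to a triple in $S$ that generated it. Because $V_A=[n]$ and $V_B=[N]$ are viewed as disjoint copies and $z\in V_B$, the edge $\{x,y\}$ lies entirely inside $V_A$ and hence can only be produced by the first-coordinate-pair contribution of some triple of $S$; this yields $z'\in V_B$ with $(x,y,z')\in S$, where symmetry of $S$ lets me fix the order of the first two coordinates as $(x,y,\cdot)$. The edge $\{x,z\}$ joins a $V_A$-vertex with a $V_B$-vertex, and so must arise from a triple of $S$ whose third coordinate equals $z$; again using symmetry, I obtain $(x,y',z)\in S$ for some $y'\in V_A$. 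Analogously, $\{y,z\}$ gives $(x',y,z)\in S$ for some $x'\in V_A$.

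It now remains to show that one of these three triples equals $(x,y,z)$ itself. If any of $x'=x$, $y'=y$, or $z'=z$ holds, the corresponding backtracked triple is $(x,y,z)$ and we are done. Otherwise $x\ne x'$, $y\ne y'$, and $z\ne z'$, and the triples $(x',y,z),(x,y',z),(x,y,z')\in S$ form exactly a star in the sense of Lemma~\ref{cylinders_and_corners}. The hypothesis of that lemma, that every line in the third dimension contains at most a single $1$, is part of our sub-permutation assumption on $f$, so this star is forbidden and we have a contradiction.

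The main obstacle is the reverse direction, and specifically the alignment step at the beginning of the backtracking: the disjointness of $V_A$ and $V_B$ is what pins down which of the three edge-types in the definition of $E_S$ could have generated each edge of the triangle, and the symmetry of $S$ is precisely what is needed to normalize all three backtracked triples into the common form $(x,y,z')$, $(x,y',z)$, $(x',y,z)$. Once that alignment is in place, the case split and the "no stars" conclusion of Lemma~\ref{cylinders_and_corners} close the argument immediately.
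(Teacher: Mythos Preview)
Your proof is correct and follows essentially the same approach as the paper: both directions are handled as in the paper, with the reverse direction backtracking each edge of the triangle to a triple of $S$ and then invoking Lemma~\ref{cylinders_and_corners} to rule out the star configuration $(x',y,z),(x,y',z),(x,y,z')$. Your write-up is in fact more explicit than the paper's, spelling out why the $V_A/V_B$ split and the symmetry of $S$ force the backtracked triples into the aligned form, and making the case split $x'=x$ / $y'=y$ / $z'=z$ versus ``all distinct'' explicit, whereas the paper compresses this into a single sentence by assuming $(x,y,z)\notin S$ for contradiction.
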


\begin{proof}
The fact that a triangle $(x,y),(x,z),(y,z)$ where $x,y \in V_A$ and $z \in V_B$ exists
in $G_S$ for every $(x,y,z) \in S$ follows immediately from the definition of $E_S$.
Assume in contradiction that there is also such a triangle
in $G_S$ for $(x,y,z) \not \in S$. Then necessarily there 
are $x', y' \in V_A$  and $z' \in V_B$
such that $(x', y, z), (x, y', z), (x, y, z') \in S$. But then $S$ contains a star, in contradiction 
to Lemma~\ref{cylinders_and_corners}.
\end{proof}

\begin{lemma}
\label{cor_2}
Let $f:[n]\times[n]\times[N] \to \{0,1\}$ be a weak sub-permutation,
and let $S$ be a symmetric cylinder intersection satisfying $|S| = (1-o(1))n^2$. 
Then $h(n,c) \le N^2/n^2$ for $c < 1/4$.
\end{lemma}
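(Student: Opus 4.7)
The plan is to construct, from $S$ and an explicit extremal triangle-free graph, a graph of density $\ge c$ for any fixed $c<1/4$ in which every edge lies in a triangle but no edge lies in more than $(N/n)^2$ triangles. I would follow the strategy of~\cite{AMS, FL}, adapted so that $S$ plays the role of the \rs\ ingredient and Lemma~\ref{cylinders_and_corners} controls the spurious triangles.

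First I would fix an explicit triangle-free graph $R$ on $V_B$ of density $d_3-o(1)=1/2-o(1)$ (for instance $K_{N/2,N/2}$), and define the graph $G^*$ on $V_A\cup V_B$ whose edges are the $V_A\times V_B$-edges of $E_S$ together with all edges of $R$. On $n+N$ vertices this has edge count dominated by $|R|\approx N^2/4$ when $N\gg n$, giving density approaching $1/4$ and so $\ge c$ for any fixed $c<1/4$. Next I would verify that every edge lies in a triangle: for a $V_A\times V_B$ edge $(x,z)$, a triangle requires a $V_B$-vertex $w$ with $(z,w)\in R$ and $(x,w)\in E_S$, which exists for most such edges by a density argument; for an $R$-edge $(z,z')$, the third vertex must come from $V(M_z)\cap V(M_{z'})\subset V_A$. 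Pruning edges not in any triangle costs only an $o(1)$ fraction of the total, preserving the density bound.

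Finally I would bound the triangles per edge by $(N/n)^2$. For an $R$-edge $(z,z')$ the triangle count is exactly $|V(M_z)\cap V(M_{z'})|$; for a $V_A\times V_B$ edge $(x,z)$ it equals the size of the intersection of $z$'s $R$-neighborhood with the set of $V_B$-vertices adjacent to $x$ via $E_S$. In both cases, the symmetry of $S$ together with the star-freeness of cylinder intersections (Lemma~\ref{cylinders_and_corners}) and a Cauchy--Schwarz averaging using $\sum_z |V(M_z)|\le 2|F|\le n^2$ should yield the asserted bound on the typical intersection. The main obstacle is the worst-case behaviour: the naive pointwise bound $|V(M_z)\cap V(M_{z'})|\le\min(|V(M_z)|,|V(M_{z'})|)$ is too weak, and the sharpening requires either discarding a small set of ``bad'' edges with too many triangles (and verifying this does not hurt density) or choosing $R$ so that its edges align well with the matching structure $\{M_z\}$, using the no-stars property of $S$ to convert a potential large intersection into a forbidden combinatorial configuration.
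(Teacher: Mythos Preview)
Your construction places the dense triangle-free graph $R$ on $V_B$ and keeps the cross edges from $E_S$. This is essentially the wrong side to densify, and the gap you flag at the end is fatal rather than a technicality. For an $R$-edge $(z,z')$ the triangle count is $|V(M_z)\cap V(M_{z'})|$, and nothing in the hypotheses bounds this pointwise: the no-stars property of Lemma~\ref{cylinders_and_corners} forbids configurations of three triples in $S$ sharing a common apex pattern, but it says nothing about the overlap of the vertex sets of two matchings $M_z,M_{z'}$. In the regime of interest $N=n^{1+o(1)}$, so the target bound $(N/n)^2=n^{o(1)}$ is tiny, while the typical $|V(M_z)|$ is of order $n^2/N=n^{1-o(1)}$; there is no reason two such sets in $[n]$ should meet in only $n^{o(1)}$ points, and a Cauchy--Schwarz average cannot rescue a worst-case statement about $h(n,c)$. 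The dual problem --- that most $R$-edges may lie in \emph{no} triangle at all --- is equally serious and would force you to discard a constant fraction of $R$.

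The paper takes the opposite route: it blows up $V_A$, replacing $[n]$ by $[2^t]\times[n]$ with $2^t=(N/n)^2$, and defines $f^t((\alpha,x),(\beta,y),z)=f(x,y,z)$. The set $S^t=\{((\alpha,x),(\beta,y),z):(x,y,z)\in S\}$ is again a symmetric cylinder intersection, so Lemma~\ref{main} still applies and every triangle with apex in $V_B$ corresponds to an element of $S^t$. Because $f$ is a weak sub-permutation, for a fixed cross edge $((\alpha,x),z)$ the partner $y$ is unique and only the label $\beta\in[2^t]$ varies, giving at most $2^t=(N/n)^2$ triangles through that edge; for an edge inside the blown-up $V_A$ the apex $z$ is unique, giving exactly one triangle. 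Restricting the $V_A$-edges to a bipartite graph $H$ of density $1/4$ on $[2^t]\times[n]$ kills all triangles with three vertices in $V_A$, and with $2^t n=N^2/n\gg N$ the resulting edge density is $(1-o(1))/4$. The blow-up is the missing idea: it manufactures the density on the side where the weak sub-permutation property already pins down the triangle count, rather than on $V_B$ where no such control is available.
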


\begin{proof}
Consider the graph $G_S$ again. 
By lemma~\ref{main}, and the fact that $f$ is a weak sub-permutation,
an edge in $G_S$ appears in exactly one triangle $(x,y,z)$ with
$x,y \in V_A$ and $z \in V_B$. Therefore, if we take a bipartite subgraph inside $V_A$, 
we will have every edge lie in exactly one triangle, which is optimal. 
But, the density of edges in $G_S$ is relatively small, since there are $n+N$
vertices and order of $(1-o(1))n^2$ edges.
To remedy this, we define a product function, aiming to increase the density
of edges. The price we pay is that the number of triangles an edge can lie in
increases.

Let $t \ge 2$ be a natural number, define $f^t:([2^t]\times[n])^2 \times [N] \to \{0,1\}$
by $f((\alpha,x),(\beta,y),z)=1$ if and only if $f(x,y,z)=1$. Let 
\[
S^t = \{ ((\alpha,x),(\beta,y),z) : (x,y,z) \in S\}.
\]
It is not hard to verify that $S^t$ is a symmetric cylinder
intersection with respect to $f^t$.
By Lemma~\ref{main} a triangle $((\alpha, x),(\beta, y),z)$ where $(\alpha, x),(\beta, y) \in ([2^t]\times[n])$ and $z \in [N]$ exists
in $G_{S^t}$ if and only if $(x,y,z) \in S$. Thus, every edge of $G_{s^t}$ lies in at most $2^t$ triangles of this sort. To remove other kind of triangles
let $H=([2^t]\times[n], E_H)$ be a bipartite graph with density $1/4$. Now define
\[
E'_{S^t} = \{((\alpha,x),(\beta,y)), ((\alpha,x), z), ((\beta,y), z): (x,y,z)\in S,  \;\; ((\alpha,x), (\beta,y)) \in E_H \}.
\] 
Then every edge in $E'_{S^t}$ lies in at least one triangle and at most $2^t$ triangles.
The number of edges satisfy $|E'_{S^t}| \ge (1-o(1))(2^tn)^2/4$. The density of edges 
is thus 
\[
(1-o(1))\frac{1}{4}\frac{(2^tn)^2}{(2^tn+N)^2}.
\]
If we take $t=2\log (N/n)$ this becomes
\[
(1-o(1))\frac{1}{4}\frac{(N^2/n)^2}{(N^2/n+N)^2}.
\]
Recall that $S$ is a 
cylinder intersection of size $(1-o(1))n^2$.
It therefore follows from the graph removal lemma (and the
hypergraph removal lemma for larger $k$) - 
see Theorem~34 in \cite{hdp17} for details -
that necessarily $n = o(N)$. 
The density is thus $(1-o(1))\frac{1}{4}$. Since 
every edge is in at most $2^t = N^2/n^2$ triangles, this completes
the proof.
\end{proof}

\subsection{The general case}
\label{hypergraphs}

We outline the proof of Theorem~\ref{main_cc_th_2} for $k \ge 3$. Since the general case is very similar to 
the proof of the $k=3$ case, we do not repeat all the details here. 

For $\vec{x}=(x_1,\ldots,x_k) \in [n]^{k-1}\times [N]$ denote by $[\vec{x}]_{k-1}$ the family of all
subsets of size $k-1$ of entries of $\vec{x}$. 
That is:
$$
[\vec{x}]_{k-1} = \binom{\{x_1,\ldots,x_k\}}{k-1}.
$$
Let $S \subseteq [n]^{k-1}\times[N]$ be a symmetric subset of entries, define
\[
E_S =  \bigcup_{\vec{x} \in S} [\vec{x}]_{k-1}.
\]

Let $G_S = (V, E_S)$ be the $(k-1)$-graph with vertex set $V=V_A\cup V_B$,
where $V_A=[n]$ and $V_B=[N]$, and edge set $E_S$. 

The generalized version of Lemma~\ref{cylinders_and_corners} is:
\begin{lemma}[\cite{hdp17}]
\label{cylinders_and_corners_general}
Let $f:[n]^{k-1}\times[N] \to \{0,1\}$ be a function satisfying
that every line in the $k$th dimension
contains at most a single $1$, and let $S$ be a cylinder intersection (w.r.t $f$).
Then, $S$ does not contain {\em stars}: $k$ entries of the form 
$(x'_1,x_2,\ldots,x_k),(x_1,x'_2,\ldots,x_k),(x_1,x_2,\ldots,x'_k)$
where $x'_i\ne x_i$ for $i=1\ldots k$.
\end{lemma}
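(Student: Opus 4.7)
The plan is to argue by contradiction, extending in a direct way the argument of Lemma~\ref{cylinders_and_corners}. Suppose $S$ contains a star, i.e., the $k$ entries $\vec{y}^{(i)} = (x_1,\ldots,x_{i-1},x'_i,x_{i+1},\ldots,x_k)$ for $i = 1,\ldots,k$, each differing from a base entry $\vec{x} = (x_1,\ldots,x_k)$ in exactly the $i$th coordinate (with $x'_i \neq x_i$). Since $S$ is a cylinder intersection, $S = C_T \cap f^{-1}(1)$ for some transcript $T$, where $C_T = \{\vec{w} : \mathcal{T}(\vec{w}) = T\}$. The standard NOF observation I would invoke is that $C_T$ decomposes as $C_T = \bigcap_{j=1}^k C_{T_j}$, where $C_{T_j}$ is the set of inputs on which player $j$ writes her prescribed part $T_j$ of $T$; membership in $C_{T_j}$ depends only on coordinates other than the $j$th, because that is exactly what player $j$ sees. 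Moreover, since the protocol computes $f$ from the transcript, $f$ is constant on $C_T$, and because $S$ is non-empty this constant value must be $1$.

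The next step is to propagate the transcript to the base point $\vec{x}$. For each $j \in [k]$, the star entry $\vec{y}^{(j)} \in S \subseteq C_{T_j}$ agrees with $\vec{x}$ on every coordinate except the $j$th; since $C_{T_j}$ is a cylinder in direction $j$, this forces $\vec{x} \in C_{T_j}$ as well. Intersecting over all $j$ yields $\vec{x} \in C_T$, hence $f(\vec{x}) = 1$.

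The contradiction is then extracted from a line in the $k$th dimension: consider $L = \{(x_1,\ldots,x_{k-1},c) : c \in [N]\}$. Both $\vec{x}$ and $\vec{y}^{(k)} = (x_1,\ldots,x_{k-1},x'_k)$ lie on $L$, they are distinct because $x'_k \neq x_k$, and $f$ evaluates to $1$ on both. This violates the hypothesis that every line in the $k$th dimension carries at most one $1$, which completes the argument.

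The main subtlety, to the extent there is any, lies in justifying the cylinder decomposition $C_T = \bigcap_j C_{T_j}$ cleanly when players' messages depend on previously written blackboard bits; this is handled by the usual inductive unpacking of the protocol, player by player, conditioning on the prefix of the transcript already fixed. Once that standard setup is in place, the rest of the argument is exactly the propagate-to-the-center and line-collision trick used for $k=3$ in Lemma~\ref{cylinders_and_corners}, now applied with $k$ witnesses instead of three.
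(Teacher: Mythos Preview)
The paper does not supply its own proof of this lemma; it is quoted from \cite{hdp17} (as is the $k=3$ version, Lemma~\ref{cylinders_and_corners}). Your argument is correct and is exactly the standard one: decompose the transcript set as an intersection of $k$ cylinders $C_{T_j}$, use the star entry $\vec{y}^{(j)}$ to place the center $\vec{x}$ in $C_{T_j}$ for each $j$, deduce $f(\vec{x})=1$, and collide $\vec{x}$ with $\vec{y}^{(k)}$ on the line in the $k$th dimension. Your remark about handling the dependence of later messages on earlier blackboard content via the usual inductive conditioning is the right way to justify the decomposition, and nothing further is needed.
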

This immediately gives:
\begin{lemma}
\label{main_hyper}
Let $f:[n]^{k-1}\times[N] \to \{0,1\}$ be a function satisfying
that every line in the $k$th dimension
contains at most a single $1$, and let $S$ be a symmetric cylinder intersection (w.r.t $f$).
Then, we have that $[\vec{x}]_{k-1}$ is a copy of $K_k$ in $G_S$ with
$x_1,\ldots x_{k-1} \in V_A$ and $x_k \in V_B$,
if and only if $\vec{x}=(x_1,\ldots,x_k) \in S$.
\end{lemma}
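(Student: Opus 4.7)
The plan is to prove the two directions separately. The ``if'' direction is essentially by definition: if $\vec{x}=(x_1,\ldots,x_k)\in S$ then $[\vec{x}]_{k-1}\subseteq E_S$, and these $k$ subsets of size $k-1$ are exactly the $(k-1)$-edges of a copy of $K_k$ on vertex set $\{x_1,\ldots,x_k\}$, with the first $k-1$ of these vertices in $V_A$ and the last in $V_B$ by construction.

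For the ``only if'' direction I would proceed by contradiction, following the template of Lemma~\ref{main}, only now exhibiting a star on $k$ entries rather than $3$. Suppose $[\vec{x}]_{k-1}$ forms a copy of $K_k$ in $G_S$ with $x_1,\ldots,x_{k-1}\in V_A$ and $x_k\in V_B$, but $\vec{x}\notin S$. For each $i\in\{1,\ldots,k\}$ the edge $e_i:=\{x_1,\ldots,x_k\}\setminus\{x_i\}$ is in $E_S$, so there exists $\vec{y}^{(i)}\in S$ whose $k$ coordinates contain $e_i$ as a $(k-1)$-subset. The bipartition $V_A\sqcup V_B$, reflected in the fact that a vector in $S$ has its first $k-1$ coordinates in $[n]$ and its last coordinate in $[N]$, rigidifies $\vec{y}^{(i)}$: for $i<k$ the edge $e_i$ contains the unique $V_B$-vertex $x_k$, forcing the last coordinate of $\vec{y}^{(i)}$ to equal $x_k$; for $i=k$ the edge $e_k$ lies entirely in $V_A$, so the first $k-1$ coordinates of $\vec{y}^{(k)}$ must form a permutation of $(x_1,\ldots,x_{k-1})$. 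Using the symmetry of $S$ in its first $k-1$ coordinates, I may normalise so that
\[
\vec{y}^{(i)}=(x_1,\ldots,x_{i-1},x'_i,x_{i+1},\ldots,x_{k-1},x_k)\ \ (i<k),\qquad \vec{y}^{(k)}=(x_1,\ldots,x_{k-1},x'_k),
\]
for suitable $x'_i\in[n]$ (respectively $x'_k\in[N]$). Since $\vec{x}\notin S$, each $\vec{y}^{(i)}$ must differ from $\vec{x}$, whence $x'_i\neq x_i$ for every $i$. But then $\vec{y}^{(1)},\ldots,\vec{y}^{(k)}$ is exactly a star of the kind ruled out for a cylinder intersection by Lemma~\ref{cylinders_and_corners_general}, contradicting the hypothesis on $S$.

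The main obstacle is the normalisation step that pins $\vec{y}^{(i)}$ to the canonical form above. It genuinely requires both ingredients of the setup: the bipartition between $V_A=[n]$ and $V_B=[N]$ (so that the role of the $k$-th coordinate cannot be swapped with that of another coordinate when identifying which element of $\vec{y}^{(i)}$ realises which vertex of $e_i$), and the symmetry of $S$ under permutations of the first $k-1$ coordinates (which lets a witness whose $V_A$-entries sit in the ``wrong'' positions be rearranged into the required shape). Once this normalisation is in place, the star structure around $\vec{x}$ and the conclusion from Lemma~\ref{cylinders_and_corners_general} are immediate, exactly paralleling the $k=3$ argument of Lemma~\ref{main}.
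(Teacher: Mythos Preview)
Your proof is correct and follows precisely the approach the paper intends: the paper's own proof simply says ``Similar to the proof of Lemma~\ref{main}, but using Lemma~\ref{cylinders_and_corners_general} instead of Lemma~\ref{cylinders_and_corners},'' and you have carried out exactly that template. Your explicit discussion of the normalisation step---using the $V_A/V_B$ bipartition to pin down the last coordinate and the symmetry of $S$ to reorder the first $k-1$ coordinates---makes precise what the paper leaves implicit already in the $k=3$ case.
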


\begin{proof}
Similar to the proof of Lemma~\ref{main}, but using Lemma~\ref{cylinders_and_corners_general}
instead of Lemma~\ref{cylinders_and_corners}.
\end{proof}

The following two lemmas generalize Lemma~\ref{cor_1} and Lemma~\ref{cor_2}:
\begin{lemma}
\label{cor_1_general}
For an integer $k\ge 3$, let $f:[n]^{k-1}\times[N] \to \{0,1\}$ be a weak sub-permutation, 
and let $S$ be a symmetric cylinder intersection. 
Let $G' = ([n], E')$ be the subrgraph of $G_S$ induced on $V_A$.
Then, the edges of $|E'|$ can be partitioned into $N$ partial Steiner systems $S(k-2,k-1)$.
\end{lemma}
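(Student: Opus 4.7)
My approach would mirror the proof of Lemma~\ref{cor_1} (the $k=3$ case). First I would define, for each $z\in[N]$, the candidate block
\[
E_z \;=\; \{\{x_1,\ldots,x_{k-1}\} : (x_1,\ldots,x_{k-1},z)\in S\}.
\]
The symmetry of $S$ in its first $k-1$ coordinates is precisely what lets me regard $E_z$ as a well-defined set of unordered $(k-1)$-subsets of $V_A$. By the definition of $G_S$, every edge of $E'$ is of the form $\{x_1,\ldots,x_{k-1}\}$ for some $(x_1,\ldots,x_{k-1},x_k)\in S$, and thus lies in $E_{x_k}$; so $E' = \bigcup_{z\in[N]}E_z$, producing at most $N$ blocks.

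Next I would verify two things: that the $E_z$'s are pairwise disjoint (so the union above is a genuine partition) and that each $E_z$ is a partial Steiner system $S(k-2,k-1)$. Both should follow from the single fact that $f$ is a weak sub-permutation --- each axis-parallel line in $[n]^{k-1}\times[N]$ carries at most one $1$. For disjointness: if a $(k-1)$-subset belonged to both $E_z$ and $E_{z'}$ with $z\neq z'$, the line in the $k$-th coordinate direction through that subset would carry two $1$'s. For the Steiner property: if edges $T\cup\{u\}$ and $T\cup\{v\}$ with $|T|=k-2$ and $u\neq v$ both lay in $E_z$, then, after using the symmetry of $S$ to place $u$ and $v$ in the $(k-1)$-th slot, the line through $(t_1,\ldots,t_{k-2},\cdot,z)$ in the $(k-1)$-th coordinate direction would carry two $1$'s. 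Either situation contradicts the weak sub-permutation hypothesis.

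I do not anticipate a real obstacle here --- the whole argument is a clean lift of the $k=3$ case, with ``matching'' replaced by ``partial Steiner system $S(k-2,k-1)$'' (characterized by each $(k-2)$-subset lying in at most one block, the evident hypergraph counterpart of each vertex lying in at most one matching edge). It is worth noting that this proof appeals \emph{only} to the weak sub-permutation hypothesis and never invokes the no-star / cylinder-intersection lemma (Lemma~\ref{cylinders_and_corners_general}); the latter would be required if one wanted the stronger claim that each $E_z$ is an \emph{induced} partial Steiner system, in analogy with the ``induced matching'' formulation of Lemma~\ref{cor_1}, and that strengthening is where the only genuine subtlety would sit.
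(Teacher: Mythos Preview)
Your argument is correct and proves the lemma exactly as stated. The paper's own proof is terser but takes a slightly different route: it invokes Lemma~\ref{cylinders_and_corners_general} (the no-star lemma) to conclude that each $E'_z$ is a partial Steiner system, whereas you extract this directly from the weak sub-permutation hypothesis via a line-in-the-$(k{-}1)$-th-direction argument. Your route is more elementary and, as you observe, suffices for the literal statement.

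The difference matters downstream. The paper's appeal to the no-star lemma is what would yield the stronger conclusion that each $E'_z$ is an \emph{induced} partial Steiner system (mirroring the ``induced matching'' in Lemma~\ref{cor_1}), and this induced version is what Theorem~\ref{main_cc_th_2}, part~1, actually asserts. The statement of Lemma~\ref{cor_1_general} omits the word ``induced'', so your proof matches it perfectly --- but you have correctly diagnosed in your final paragraph that the induced strengthening, which the paper evidently intends, is precisely where Lemma~\ref{cylinders_and_corners_general} becomes essential rather than merely convenient.
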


\begin{proof}
The proof is similar to the proof of Lemma~\ref{cor_1}, we rewrite the main points. The edges of $G'$ are:
\[
E' = \{ (x_1,\ldots,x_{k-1}) : \exists x_k \in V_B \;\; s.t \;\; (x_1,\ldots,x_{k-1},x_k) \in S\}.
\]
Partition the edge set $E'$ as follows, for every $z\in V_B$ let
$$
E'_z = \{ (x_1,\ldots,x_{k-1}) : (x,\ldots,x_{k-1},z)\in S\}.
$$  
This is a partition of $E'$ since $f$ a sub-permutation, and 
the fact that $E'_z$ is a partial Steiner system follows from Lemma~\ref{cylinders_and_corners_general}.
\end{proof}

\begin{lemma}
\label{cor_2_general}
For an integer $k\ge 3$, let $f:[n]^{k-1}\times[N] \to \{0,1\}$ be a weak sub-permutation,
and let $S$ be a symmetric cylinder intersection satisfying $|S| = (1-o(1))n^{k-1}$. 
Then $h_{k-1}(n,c) \le (N/n)^2$ for $c < d_k$.
\end{lemma}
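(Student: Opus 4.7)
The plan is to mimic the proof of Lemma~\ref{cor_2}, replacing the triangle-free (i.e.\ $K_3$-free) bipartite graph of density $1/4$ by an explicit $K_k$-free $(k-1)$-graph of density $d_k-o(1)$ on a blown-up vertex set. For a parameter $t\geq 2$ to be fixed later, define the blow-up $f^t:([2^t]\times[n])^{k-1}\times[N]\to\{0,1\}$ by $f^t((\alpha_1,x_1),\ldots,(\alpha_{k-1},x_{k-1}),x_k)=f(x_1,\ldots,x_{k-1},x_k)$ and
\[
S^t=\{((\alpha_1,x_1),\ldots,(\alpha_{k-1},x_{k-1}),x_k):(x_1,\ldots,x_k)\in S,\ \alpha_i\in[2^t]\}.
\]
Then $f^t$ is a weak sub-permutation and $S^t$ is a symmetric cylinder intersection with respect to $f^t$, exactly as in the proof of Lemma~\ref{cor_2}.

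By the definition of the Tur\'an density $d_k$, fix a $K_k$-free $(k-1)$-graph $H$ on $V_A':=[2^t]\times[n]$ of density $d_k-o(1)$; this is where the hypothesis about explicit $K_k$-free constructions of density $d_k-o(1)$ is used. Put $V_B=[N]$ and define the hypergraph $G'$ on $V_A'\cup V_B$ with edge set
\[
E'_{S^t}=\bigcup_{\substack{\vec{x}=(x_1,\ldots,x_k)\in S,\ \vec{\alpha}\in[2^t]^{k-1} \\ \{(\alpha_i,x_i):i\le k-1\}\in H}}\binom{\{(\alpha_1,x_1),\ldots,(\alpha_{k-1},x_{k-1}),x_k\}}{k-1},
\]
i.e., the union, over all ``witnessed'' copies of $K_k$, of their $(k-1)$-faces.

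Next I would verify the structural claim that copies of $K_k$ in $G'$ are in bijection with witnessing pairs $(\vec{x},\vec{\alpha})$. Any $K_k$ in $G'$ contains at most one vertex of $V_B$, since no hyperedge contains two $V_B$-vertices; a $K_k$ contained in $V_A'$ is excluded because the sub-hypergraph induced on $V_A'$ is contained in $H$; and the remaining case (one vertex in $V_B$, the rest in $V_A'$) reduces, via $E'_{S^t}\subseteq E_{S^t}$, to Lemma~\ref{main_hyper} applied to $G_{S^t}$, forcing $\vec{x}\in S$ and the $V_A'$-face to lie in $H$. By construction every edge of $E'_{S^t}$ then lies in at least one copy of $K_k$, and the number of copies through an edge is at most $1$ for an $H$-face (by the weak sub-permutation property in the $x_k$-direction) and at most $2^t$ for a mixed edge $\{(\alpha_i,x_i):i\in I\}\cup\{x_k\}$ with $|I|=k-2$ (the missing coordinate $x_j$ is uniquely determined by the weak sub-permutation, leaving $\alpha_j\in[2^t]$ free).

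Finally, choose $t$ with $2^t=(N/n)^2$. Since $|S|=(1-o(1))n^{k-1}$, Theorem~34 of \cite{hdp17} (the removal-lemma consequence invoked in the proof of Lemma~\ref{cor_2}) yields $n=o(N)$, so $|V_A'|=N^2/n$ dominates $|V_B|=N$ and the total vertex count is asymptotically $|V_A'|$. Using $|S|=(1-o(1))n^{k-1}$ once more, the fraction of $H$-faces $\{(\alpha_i,x_i):i\le k-1\}$ for which some $x_k$ satisfies $\vec{x}\in S$ is $1-o(1)$, so $E'_{S^t}$ has edge density $d_k-o(1)$. For any fixed $c<d_k$ this furnishes a $(k-1)$-graph of density at least $c$ in which every edge lies in at least one and at most $(N/n)^2$ copies of $K_k$, giving $h_{k-1}(n,c)\le (N/n)^2$. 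The main obstacle is the two-sided $K_k$-per-edge analysis: both a lower bound of $1$ (so that the definition of $h_{k-1}$ applies on $G'$) and the upper bound of $2^t$ must hold simultaneously, which is exactly what forces the dual requirements that $E'_{S^t}$ contain \emph{all} $(k-1)$-faces of each witnessed $K_k$ while the $V_A'$-face is restricted to $H$.
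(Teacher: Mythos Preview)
Your proposal is correct and follows precisely the approach indicated in the paper's proof: reduce to the template of Lemma~\ref{cor_2}, replacing the bipartite $K_3$-free graph of density $1/4$ by a $K_k$-free $(k-1)$-graph $H$ on $[2^t]\times[n]$ of density $d_k-o(1)$, and then repeat the blow-up, density, and $2^t=(N/n)^2$ calculations verbatim. You have simply written out explicitly the case analysis for $K_k$'s in $G'$ and the per-edge counts that the paper leaves implicit; the two arguments are the same.
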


\begin{proof}
The proof is very similar to 
the proof of Lemma~\ref{cor_2}, just instead of taking
the subgraph $H=([2^t]\times[n], E_H)$ to 
be a bipartite graph with density $1/4$, take a subhypergraph
with no copies of $K_k$ and density $d_k$. Note that we do not need to know $d_k$ or $H$, we just
need to know that $d_k$ is finite and that $H$ exists.  
\end{proof}

\section{Summary}
\label{s5}

As mentioned in the introduction, there is a link between the main construction of \cite{AMS}
and the original construction of Ruzsa and Szemer\'{e}di \cite{RSz}. We describe this link here,
starting with a new construction, equivalent to the one of Ruzsa and Szemer\'{e}di, derived
using the recipe in Section~\ref{subsec:a_recipe}. Our approach avoids the use of Behrends construction 
of a large set of integers without a three-term arithmetic progression \cite{Be}, which was the heart
of the construction of Ruzsa and Szemer\'{e}di.

\begin{lemma}[\cite{RSz}]
There exists a graph on $n$ vertices, with $n^2/2^{O(\sqrt{\log n})}$ edges, that is the union
of $\Theta(n)$ induced matchings.
\end{lemma}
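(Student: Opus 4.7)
My plan is to instantiate Recipe~\ref{recipe} with the same sphere-verification idea as Protocol~\ref{protocol_AP}, but with $V_B$ contracted from the half-integer midpoint set $Z_{q,d}$ to $V_A=[q]^d$, which is exactly what forces the number of induced matchings to end up linear in $n$.

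Concretely, pick $d=\Theta(\sqrt{\log n})$ and $q=2^{\Theta(\sqrt{\log n})}$ with $n=q^d$; set $V_A=V_B=[q]^d$; and let $s$ be a value of $\|X-Y\|_2^2/4$ near the mode over pairs $X,Y\in V_A$ with $X\equiv Y\pmod{2}$ componentwise. Define
\[
f:V_A\times V_A\times V_B\to\{0,1\},\qquad f(x,y,z)=1 \text{ iff } x+y=2z \text{ in }\mathbb{Z}^d \text{ and } \|x-y\|_2^2=4s.
\]
Since $(x+y)/2$ lands in $[q]^d$ precisely when $x\equiv y\pmod{2}$ componentwise, and is then uniquely determined, $f$ is a weak sub-permutation.

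The heart of the argument is to exhibit $S=f^{-1}(1)$ as a symmetric cylinder intersection, with no recourse to any explicit Behrend-type set of integers. Put
\[
C_1=\{(x,y,z):\|y-z\|_2^2=s\},\quad C_2=\{(x,y,z):\|x-z\|_2^2=s\},\quad C_3=\{(x,y,z):\|x-y\|_2^2=4s\},
\]
cylinders not depending on $x$, $y$, $z$ respectively. The inclusion $S\subseteq C_1\cap C_2\cap C_3$ is immediate from $x+y=2z$. For the reverse, the parallelogram identity
\[
\|x-y\|_2^2+\|x+y-2z\|_2^2=2\|x-z\|_2^2+2\|y-z\|_2^2
\]
applied on $C_1\cap C_2\cap C_3$ yields $\|x+y-2z\|_2^2=0$, so $x+y=2z$ in $\mathbb{Z}^d$ and the implicit midpoint $z$ is forced to sit in $V_B$. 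Symmetry of $S$ under $(x,y)\mapsto(y,x)$ is built into $C_1$ and $C_2$.

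With these choices, Hoeffding's inequality applied to $\|(X-Y)/2\|_2^2$, restricted to same-parity pairs, shows that for a suitable mode $s$ an $\Omega(1/(\sqrt{d}\,q^2))$ fraction of the $n^2/2^d$ same-parity pairs $(x,y)$ satisfies $\|x-y\|_2^2=4s$, giving $|S|\ge n^2/(2^d\sqrt{d}\,q^2)=n^2/2^{O(\sqrt{\log n})}$. Theorem~\ref{main_cc_th_2} applied with $N=n$ then produces an explicit graph on $n$ vertices of edge density $|S|/n^2=2^{-O(\sqrt{\log n})}$---that is, $n^2/2^{O(\sqrt{\log n})}$ edges---decomposed into $N=n=\Theta(n)$ induced matchings, which is exactly the statement of the lemma. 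The main technical obstacle is the parity bookkeeping forced by the contraction $V_B=[q]^d$: it costs the factor $2^{-d}$ in $|S|$, but this factor is absorbed into the exponent by the tuning of $d$ and $q$, so it ultimately costs nothing in the bound on the number of edges while cutting the matching count from $n\cdot 2^d$ (the size of $Z_{q,d}$) down to the required $\Theta(n)$.
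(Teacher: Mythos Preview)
Your proposal is correct and follows essentially the same route as the paper: the paper also takes $V_A=V_B=[q]^d$ with the rule $x+y=2z$, uses the sphere-verification protocol of Section~\ref{an_example} (whose correctness is precisely your parallelogram-law identity), and sets $d=\log q=\Theta(\sqrt{\log n})$ to absorb the $2^{-d}$ parity loss into the $2^{O(\sqrt{\log n})}$ exponent. The only cosmetic difference is packaging: the paper goes through Recipe~\ref{recipe} and Theorem~\ref{main_cc_th} (with $C(P)=2$, yielding $N'=4n$), whereas you bake the fixed sphere value $s$ into $f$, exhibit $f^{-1}(1)=C_1\cap C_2\cap C_3$ directly, and invoke Theorem~\ref{main_cc_th_2} with $N=n$.
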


\begin{proof}
We follow the steps of Recipe~\ref{recipe}. The details are very similar to those in Section~\ref{an_example},
with slight modifications.

\paragraph{Choosing the function}
Let $q,d > 1$ be natural numbers and denote $n=q^d$. 
Let $f_{q, d}: ([q]^d)^3 \to \{0,1\}$ be the function 
satisfying $f_{q, d}(x, y, z) = 1$ if and only if $x+y = 2z$. 
It is not hard to verify
that $f_{q, d}$ is a weak sub-permutation, in fact it is a weak permutation.
We later set $q$ to be even and $d=\log(q)=\Theta(\sqrt{\log n})$.

\paragraph{The protocol}

The protocol is identical to the protocol for $g_{q, d}$ in Section~\ref{an_example}.

\paragraph{The cost of the protocol}
The cost of the protocol is $C(P)=2$.

\paragraph{The choice of $S$}

By Hoeffding's inequality, 
with constant probability, $\|x-y\|_2^2$ takes one of $\sqrt{d}q^2$ values. There is, therefore, a 
transcript $\trans$ for the third player such that $|S_k(\trans)| \ge \Omega(|f_{q,d}^{-1}(1)|/\sqrt{d}q^2)$.
Where $|f_{q,d}^{-1}(1)|$ is the number of $1$'s of the function $f_{q,d}$. That is, it is the
number of $x,y \in [q]^d$ such that $(x+y)/2$ is also in $[q]^d$. Assume for simplicity that $q$ is even,
then $|f_{q,d}^{-1}(1)| \ge q^d \cdot (q/2)^d$. Therefore
$$
|S_k(\trans)| \ge \Omega(q^d \cdot (q/2)^d/\sqrt{d}q^2) \ge \Omega(n^2/2^d\sqrt{d}q^2).
$$ 
Taking $d = \log q = \Theta(\sqrt{\log n})$ we get $|S_k(\trans)| \ge n^2/2^{O(\sqrt{\log n})}$.
$S_k(\trans)$ is symmetric, thus Lemma~\ref{lem:simple_rs_graph} follows from Theorem~\ref{main_cc_th}.
\end{proof}

We can now describe the relation between the construction of Ruzsa and Szemer\'{e}di \cite{RSz}
and that of \cite{AMS}. Call the construction above $A$, the simple construction of Section~\ref{an_example} $B$,
and the construction of 
Section~\ref{The_constructions_of_Alon_etal} (providing
the  graphs similar to \cite{AMS}) $C$. 
The table below compares these constructions. 

\begin{center}
\begin{tabular}{|p{2.5cm}|p{3.5cm}|p{3.5cm}|p{3.5cm}|}
\hline
 & A & B & C \\
\hline
Function                    &  Domain: $([q]^d)^3$ \;\;\;\;\; Def. rule: x+y=2z & 
Dom.: $([q]^d)^2 \times Z_{q,d}$ \;\;\;\;\; Def. rule: x+y=2z  & 
Dom.: $([q]^d)^2 \times Z_{q,d}$ \;\;\;\;\; Def. rule: x+y=2z\\
\hline
Protocol idea              &  Third player sends $\|x-y\|_2^2$. & 
Third player sends $\|x-y\|_2^2$. & 
Third player sends some bits of $\|x-y\|_2^2$, then the first two players compute the rest.\\
\hline
Number of vertices & $n=q^d$ & $n=q^d$ & $n=q^d$ \\
\hline
Edge density                     &  $2^{-O(\sqrt{\log n})}$ & 
$\Omega(\log \log n/ \log ^{\epsilon} n)$ for any constant $\epsilon > 1/2$ & 
1-o(1)\\
\hline 
Number of matchings  & $\Theta(n)$ &
$n^{1+O(1/\log\log n)}$ &
$n^{1+O(1/\log\log n)}$ \\
\hline
\end{tabular}
\end{center}

\end{document}